%
\documentclass[runningheads]{llncs}
\usepackage{xcolor}
\usepackage{graphicx}
\usepackage[linesnumbered,ruled,vlined]{algorithm2e}
\usepackage[normalem]{ulem}
\usepackage{float}
\usepackage{rotating}
\newcommand{\remove}[1]{}
\usepackage{nth}
\let\oldnl\nl
\newcommand{\nonl}{\renewcommand{\nl}{\let\nl\oldnl}}

\newenvironment{sloppypar*}
 {\sloppy\ignorespaces}
 {\par}

\newenvironment{customlemma}[1]
  {\innercustomlemma}
  {\endinnercustomlemma}

\newenvironment{customthm}[1]
  {\innercustomthm}
  {\endinnercustomthm}
%

\begin{document}
\title{Local Deal-Agreement Based Monotonic \\Distributed Algorithms for Load Balancing \\in General Graphs\\ \normalfont \normalsize (Preliminary Version)}
\author{}
\institute{}
\titlerunning{Distributed Algorithms for Load Balancing}
%
\author{Yefim Dinitz\inst{1} \and
Shlomi Dolev\inst{1} \and
Manish Kumar\inst{1}}

\authorrunning{Y. Dinitz et al.}

\institute{
Ben-Gurion University of the Negev, Israel\\
\email{\{dinitz, dolev\}@cs.bgu.ac.il, manishk@post.bgu.ac.il}
}

\maketitle              

\begin{abstract}
In computer networks, participants may cooperate in processing tasks, so that loads are balanced among them. We present local distributed algorithms that (repeatedly) use local imbalance criteria to transfer loads concurrently across the participants of the system, iterating until all loads are balanced. Our algorithms are based on a short local deal-agreement communication of proposal/deal, based on the neighborhood loads. They converge monotonically, always providing a better state as the execution progresses. Besides, our algorithms avoid making loads temporarily negative. Thus, they may be considered \emph{anytime} ones, in the sense that they can be stopped at any time during the execution. We show that our synchronous load balancing algorithms achieve \emph{$\epsilon$-Balanced} state for the continuous setting and \emph{1-Balanced} state for the discrete setting in all graphs, within $O(n D \log(n K/\epsilon))$ and $O(n D  \log(n K/D) + n D^2)$ time, respectively, where $n$ is the number of nodes, $K$ is the initial discrepancy, $D$ is the graph diameter, and $\epsilon$ is the final discrepancy. Our other monotonic synchronous and asynchronous algorithms for the discrete setting are generalizations of the first presented algorithms, where load balancing is performed concurrently with more than one neighbor. These algorithms arrive at a \emph{1-Balanced} state in time $O(n K^2)$ in general graphs, but have a potential to be faster as the loads are balanced among all neighbors, rather than with only one; we describe a scenario that demonstrates the potential for a fast ($O(1)$) convergence. Our asynchronous algorithm avoids the need to wait for the slowest participants' activity prior to making the next load balancing steps as synchronous settings restrict. We also introduce a self-stabilizing version of our asynchronous algorithm. 

\keywords{Distributed algorithms \and Deterministic \and Load balancing \and Self-Stabilization \and Monotonic}
\end{abstract}
\section{Introduction}
\label{sec:introduction} 

The load balancing problem is defined when there is an undirected network (graph) of computers (nodes), each one assigned a non-negative working load, and they like to balance their loads. If nodes $u$ and $v$ are connected by an edge in the graph, then any part of the load of $u$ may be transferred over that edge from $u$ to $v$, and similarly from $v$ to $u$.
The possibility of balancing loads among computers
is a fundamental benefit of the possible collaboration and coordination in distributed systems. The application and scope change over time, for example, the scopes include grid computing, clusters, and clouds. 

We assume that computation in the distributed systems is concurrent, that is, at any node, the only way to get information on the graph is by communicating with its neighbors. The initial information at a node is its load and the list of edges connecting it to its neighbors. The ideal goal of load balancing to make all nodes load equal to the average load; usually, some approximation to the ideal balancing is looked for, in order to avoid global operations (such as a preprocessed leader election based load balancing, where a leader node should be able to collect, while the system waits, store and process information from the entire system). An accepted global measure for the deviation of a current state from the balanced one is its (global) \emph{discrepancy}, defined as 
$K=L_{max}-L_{min}$, where $L_{max}$, resp., $L_{min}$, is 
the currently maximum, resp., minimum, node load in the graph. An alternative way to measure the deviation is in terms of local imbalance which is the maximal difference of loads between neighboring nodes.  A state is said to be \emph{$\epsilon$-Balanced} if that maximal local difference is at most $\epsilon$ (e.g., ``locally optimal load balancing'' in \cite{DBLP:conf/wdag/FeuilloleyHS15} is a 1-Balanced state in our terms). 
The continuous and discrete (integer) versions of the problem are distinguished. In the continuous setting, any amount of the load may be transferred over edges, while at the discrete setting, all loads and thus also all transfer amounts should be integers. 
In this paper, we concentrate on deterministic algorithms solving the problem in a time polynomial in the global input size, that is in the number $n$ of graph nodes and in the logarithm of the maximal load. 

The research on the load balancing problem has a long history. The two pioneering papers solving the problem are those of Cybenko \cite{DBLP:journals/jpdc/Cybenko89} and Boillat \cite{DBLP:journals/concurrency/Boillat90}. Both are based on the concept of diffusion: at any synchronized round, every node divides a certain part of its load \emph{equally} among its neighbors, keeping the rest of its load for itself; for regular graphs, the load fraction kept for itself is usually set to be equal to that sent to each neighbor. Many further solutions are based on diffusion, see \cite{DBLP:conf/focs/RabaniSW98,10.5555/59912,DBLP:conf/podc/AkbariBS12,DBLP:conf/soda/BerenbrinkCFFS11,DBLP:conf/stoc/AielloAMR93,DBLP:conf/stoc/GhoshLMMPRRTZ95}. Most of the papers consider \emph{$d$-regular} graphs, and use Markov chains and ergodic theory for deriving the rate of convergence. This approach works smoothly in the continuous setting. 

Diffusion methods in the discrete setting requires rounding of every transferred amount, which makes the analysis harder; Rabani et al.\ \cite{DBLP:conf/focs/RabaniSW98} made substantial advancement in that direction.
However, the final discrepancy achieved by their method is not constant. As mentioned in \cite{DBLP:conf/podc/ElsasserS10}, the discrepancy cannot be reduced below $\Omega(d_{min} \cdot D)$ by a deterministic diffusion-based algorithm, where $D$ is the diameter of the graph and $d_{min}$ is the minimum degree in the graph. The algorithms in papers  \cite{DBLP:journals/jgaa/ElsasserMS06,DBLP:conf/podc/ElsasserS10} are based on \emph{randomized} post-processing, using random walks, for decreasing the discrepancy to a constant with high probability. 

One of the alternatives to diffusion is the \emph{matching} approach. There, a node matching is chosen at the beginning of each synchronized round, and after that every two matched nodes balance their loads. See, e.g., \cite{DBLP:conf/stoc/FriedrichS09,DBLP:conf/focs/SauerwaldS12}. In known deterministic algorithms, those matching, are usually chosen in advance according to the graph structure; in randomized algorithms, they are chosen randomly. The deterministic matching algorithms of Feuilloley et al.\ \cite{DBLP:conf/wdag/FeuilloleyHS15} achieve a 1-Balanced final state for general graphs, in time not depending on the graph size $n$ and depending cubically, and also exponentially for the discrete setting, on the initial discrepancy $K$. These algorithms extensively use communication between nodes. 
(In the further discussion, we do not relate to the approach and results of Feuilloley et al.\ \cite{DBLP:conf/wdag/FeuilloleyHS15}, since the used computational model is different from that we concentrate on.)
Yet another \emph{balancing circuits} \cite{DBLP:conf/stoc/AspnesHS91,DBLP:conf/focs/RabaniSW98} approach is based on sorting circuits \cite{DBLP:conf/focs/RabaniSW98}, where each 2-input comparator balances the loads of its input nodes (instead of the comparison), see e.g., \cite{10.1145/103418.103421}. Randomized matching and balancing circuits algorithms of \cite{DBLP:conf/focs/SauerwaldS12} achieve a constant final discrepancy w.h.p.

All the above-mentioned papers use the synchronous distributed model. The research on 
load balancing in the \emph{asynchronous} distributed setting (where the time of message delivery 
is not constant and might be unpredictably large) is not extensive. The only theoretically based approach suggested for it is turning the asynchronous setting into synchronous by appropriately enlarging the time unit, see e.g., \cite{DBLP:conf/stoc/AielloAMR93}.

Note that the suggested solutions ignore the possibility of a short coordinating between nodes.
A node just informs each of its neighbor on the load amount transferred to it, if any, and lets the neighbors know on its own resulting load after each round. 
In previous diffusion based load balancing works, only its own load and the pre-given distribution policy define the decisions at each node; as a maximum, a node also pays attention to
the current loads at its neighbors. 

The above-mentioned approaches leave the following gaps in the suggested solutions (among others):
\begin{itemize}

    \remove{
    \item  \sout{While the convergence rate to the desired state is analyzed, \sout{there is no guarantee} \textcolor{red}{ we look for a combination of guarantees and, no algorithm satisfies all of them,} as to the quality of graph states along the way. In particular, at intermediate states, either loads might become negative (see, e.g, \cite{DBLP:conf/podc/ElsasserS10,DBLP:conf/soda/FriedrichGS10,DBLP:conf/focs/SauerwaldS12,DBLP:conf/podc/AkbariBS12}), which is not appropriate in the output, or the discrepancy might grow w.r.t.\ previous states (see, e.g, \cite{DBLP:conf/focs/RabaniSW98,DBLP:conf/focs/SauerwaldS12}). Following \cite{DBLP:conf/aaai/DeanB88,DBLP:conf/uai/Horvitz87}, we call an algorithm \emph{anytime}, \textcolor{red}{if each of its intermediate solution is not worse than the previous intermediate solution.} In particular, this property of the algorithm is important for the message-passing settings where the actual time of message delivery might be long. \textcolor{blue}{The shortcomings, as shown above}, prevent the known load balancing algorithms from being anytime ones.}
    }
    
    \item
    Following \cite{DBLP:conf/aaai/DeanB88,DBLP:conf/uai/Horvitz87}, we call an algorithm \emph{anytime}, if any of its intermediate solutions is feasible and not worse than the previous ones; thus, it may be used immediately in the case of a need/emergency. In particular, this property of algorithms is important for the message-passing settings where the actual time of message delivery might be long.
    In the suggested solutions to the load balancing problem, to the contrast, either, loads might become negative (see, e.g, \cite{DBLP:conf/podc/ElsasserS10,DBLP:conf/soda/FriedrichGS10,DBLP:conf/focs/SauerwaldS12,DBLP:conf/podc/AkbariBS12}), which is not appropriate in an output, or the discrepancy might grow w.r.t.\ previous states (see, e.g, \cite{DBLP:conf/focs/RabaniSW98,DBLP:conf/focs/SauerwaldS12}).
    
    \item  
    The \emph{general graph} case is more or less neglected in the literature; instead, most of the papers focus on $d$-regular graphs. There are only a few brief remarks on the possibility of generalizing the results to general graphs, see, e.g., \cite [second paragraph] {DBLP:conf/focs/RabaniSW98}.

    \remove{
    \sout{\item
    There is a factor of $1/(1 - \lambda)$ in almost all time bounds, where $\lambda$ is the second largest eigenvalue of some related diffusion matrix M. \textcolor{blue}{The double gap} is in: (\textit{a}) the uncertainty of M, and (\textit{b}) the lacking evaluation of $(1 - \lambda)$, in the terms of the given general graph. }
    %
    %
    \sout{\item  Even the \textcolor{blue}{most advanced (why?)} deterministic algorithm for the discrete problem version of Rabani et al.\ \cite{DBLP:conf/focs/RabaniSW98} has two shortcomings:}}
    \item  The \emph{final discrepancy} achieved by the deterministic algorithms is far not a constant.
    
    \remove{
    \sout{algorithm is  \textcolor{blue}{not so small} \textcolor{red}{(R3: Be precise or at least say that it is large in comparison with something else if the precise number is hard to state. )}. No discussion is given on decreasing it to the minimal possible value.}
     \sout{   \item  The convergence rate is provided in terms of eigenvalues of some (\textcolor{blue}{in general, uncertain}) matrix. Neither the best choice of a matrix making the convergence rate maximal for the given graph nor the worst-case bound for the general graph is discussed.}}
    \item  The \emph{asynchronous setting} is explored quite weakly in the literature. There, no worst-case bounds for achieving the (almost) balanced state are provided.
\end{itemize}

\enlargethispage{\baselineskip}
We suggest using the distributed computing approach in load balancing, based on \emph{short agreement between neighboring nodes}. By our algorithms, we tried 
to close the gaps as above, as far as possible.
In this paper, we concentrate on deterministic algorithms with deterministic worst-case analysis, leaving the development of randomized algorithms (that fit scenarios in which no hard deadlines are required) for further research.

We develop \emph{local} distributed algorithms, in the sense that each node uses the information on its neighborhood only, with no global information collected at the nodes. The advantage to this is that there is no ``entrance fee'', so that the actual running time of an algorithm can be quite small, if we are lucky, say, if we have a starting imbalance that can be balanced in small areas without influencing the entire system; in Section~\ref{s:diffusion}, we describe a scenario, which demonstrates the potential for a fast convergence.
Accordingly, we choose the ideal goal of a discrete local algorithm to be a \emph{$1$-Balanced state}, rather than a small discrepancy. Indeed, let us consider a graph that is a path of length $2n-1$, where the node loads are 0, 1, 1, 2, 2,  ..., $n-1$, $n-1$, $n$, along it. No node has a possibility to improve the load balancing in its neighborhood, while the discrepancy is $n$,
a half of the number of graph nodes.

We say that a load balancing algorithm is \emph{monotonic} if the following conditions always hold during any its execution: (a) each load transfer is from a higher loaded node to a less loaded one, and (b) the maximal load value never increases and the minimal load value never decreases. By item (b), the discrepancy never increases during any execution of a monotonic algorithm, and negative loads are never created, that is, \emph{any monotonic algorithm is anytime}.

Our main results are as follows, where 
$D$ is the graph diameter, and $\epsilon$ is the 
an arbitrarily small bound 
for the final discrepancy bound.
We call an algorithm a \emph{single proposal} one, if each node at each round proposes a load transfer to at most one node and accepts a proposal of at most one node, and a \emph{distributed proposal} one, when proposals can be made/accepted to/from several nodes.
\begin{itemize}
    \item 
    In the continuous setting, the first synchronized deterministic algorithm for \emph{general graphs}, which is \emph{monotonic} and works in time $O(n D \log(n K/\epsilon))$.
    \item 
    In the discrete setting, the first deterministic algorithms for \emph{general graphs} achieving a \emph{1-Balanced state} in time depending on the initial discrepancy logarithmically. It is \emph{monotonic} and works in time $O(n D  \log(n K/D) + n D^2)$.
    \item
    Distributed proposals scheme that generalizes the single proposal scheme of those two algorithms.
    \item  
    First \emph{asynchronous anytime} load balancing algorithm.
    \item
    A \emph{self-stabilizing} version of our asynchronous algorithm.
\end{itemize}
\emph{Remark\/}:
Let us consider the place of single proposal algorithms 
in the classification of load balancing algorithms. At each round, each node $u$ participates in at most two load transfers: one where it transfers a load and one where it receives a load. In other words, the set of edges with transfers, each oriented in the transfer direction, forms \emph{a kind of matching}, where both in- and out-degree of any node are at most 1 each. This property may be considered a generalization of the matching approach to load balancing. Therefore, one may call single proposal algorithms ``short agreement based generalized matching'' ones. 

Let us compare the results achieved by our two single proposal algorithms with those achieved by deterministic algorithms in Rabani et al.\ \cite{DBLP:conf/focs/RabaniSW98}. As of the scope of \cite{DBLP:conf/focs/RabaniSW98}, details are provided there for the $d$-regular graphs only, and there is no anytime property. For continuous setting, consider the time bound $O\left(\frac{\ln (Kn^2/\epsilon)}{(1-\lambda)}\right)$ of \cite{DBLP:conf/focs/RabaniSW98} for reaching the discrepancy of $\epsilon$ in the worst case. It is the same as our bound in its logarithmic part. Its factor of $\frac 1{1-\lambda}$ is $\Theta(n^2)$ in the worst case, e.g., for a graph that is a cycle. Also, the factor $nD$ in our bound is $\Theta(n^2)$ in the worst case. We conclude that 
the worst-case bounds overall graphs of the same size are both $O(n^2 \log (Kn))$. Note that the class of instances bad for the time bound $O \left( \frac{\log (Kn)}{1-\lambda}\right)$ is quite wide. Consider an \emph{arbitrary} graph $G$ and add to it a cycle of length $n$ with a single node common with $G$. We believe that the time bound for the resulting graph will not be better than for a cycle of length $n$, that is $O(n^2 \log (Kn))$.
For the discrete setting, we achieve a 1-Balanced state, which is not achieved in Rabani et al.\  \cite{DBLP:conf/focs/RabaniSW98}. 
As on discrepancy (which is our secondary goal), we guarantee  discrepancy $D$, while the final discrepancy in \cite{DBLP:conf/focs/RabaniSW98} is $O\left( \frac{d \log n}{1-\lambda} \right)$;
we believe that in the class of instances as described above but with the added cycle of length $D$, the latter bound is $O(d D^2 \log(n))$.

\remove{
\footnote{\textcolor{red}{
   We found it difficult to include parameter $\lambda$ into comparison, since it relates to the diffusion matrix, which is not well-defined by a general graph itself.}
   }
   }
\remove{
\sout{
It may be of some technical interest to compare the time bound $O(nD \log(nK))$ of the first phase of our algorithm, achieving discrepancy $2D$, with the time bound $O \left( \frac{\log (Kn)}{1-\lambda}\right)$ of \cite{DBLP:conf/focs/RabaniSW98}, achieving a \textcolor{blue}{\emph{much worse} (R3: bit aggressive and it depends on the range 
that you are looking for in terms of K versus n (I guess here you don't want 
bounded degree d). If K is huge, then your algorithm is worse. And I guess you 
consider quite large K, otherwise you wouldn't be shocked by the $K^3$ in [12])}, in general, discrepancy $O\left( \frac{d \log n}{1-\lambda} \right)$. These bounds look similar, with the same logarithmic part and the factor of $\Theta(n^2)$ in the worst case over all graphs of the same size. }}

Summarizing, the main contribution of this paper to the load balancing research is turning attention of the load balancing community: (a) to using standard distributed computing methods, that is to short agreement based local distributed algorithms, (b) to the general graphs case, and (c) to anytime monotonic algorithms, and making first steps in these three directions.

\enlargethispage{\baselineskip}

\textbf{Organization of the paper.} Section~\ref{s:relatedwork} presents a summary of related work. The two single proposal algorithms are presented in Section~\ref{s:concentrated}, with the analysis. Section~\ref{s:diffusion} describes and analyzes the distributed proposal algorithms. Section~\ref{s:async} discusses the asynchronous version of load balancing algorithms. Section~\ref{s:selfstab} introduces a self-stabilizing version of our asynchronous algorithm. The conclusion appears in Section~\ref{s:conclusion}.

\subsection{Diffusion vs Deal-Agreement based Load balancing Algorithms}
To gain intuition we start with an example. Consider Fig. 1, each node is transferring the load to its neighbor but in the next iteration node $A$ has a large load, which may lead to more iterations of the algorithm for balancing the load. Note that the original load, 5, becomes 21 following the load diffusion. In each round of deal-agreement based load balancing algorithm, each node computes the \emph{deal} (amount of load accepted by a node without violating the value of \textit{TentativeLoad}) individually then transfers the load to its neighbors. An example of our deal-agreement based load balancing appears in Fig. 2, node $A$ receives the load from the neighbors according to the proposal up to the value of \textit{TentativeLoad} but never exceeds the \textit{TentativeLoad} (load of the node after giving loads to its neighbors). Note that the loads are transferring towards \emph{1-Balanced} without violating the condition of monotonicity. Our deal-agreement based algorithm  is an anytime and monotonic algorithm that ensures the maximal load bound of nodes is not exceeded. Namely, if there is an upper bound on the loads that can be accumulated in a single node, then starting with loads that respect this bound the bound is not violated due to load exchanges. As opposite to diffusion based algorithms, e.g., load value 21 on node A in Fig. 1 may exceed the capacity of a node.

\begin{figure*}
\centering 
\includegraphics[page=8,width=0.95\textwidth]{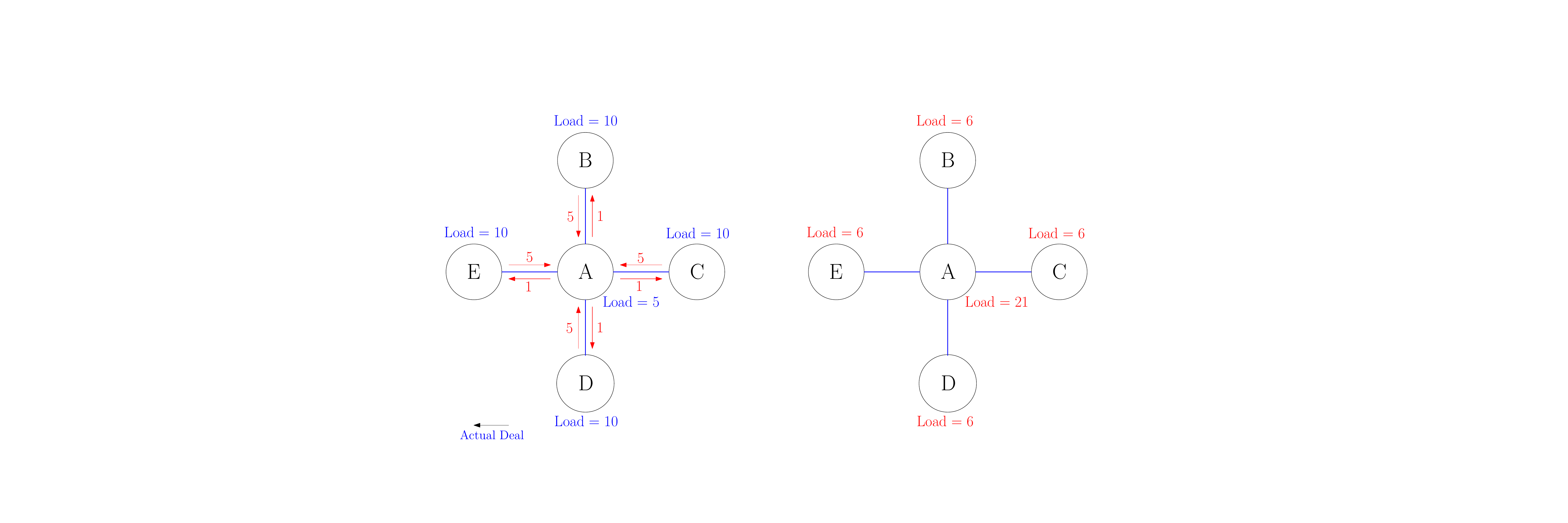}
\caption{Example of Diffusion Load Balancing Algorithm }
\end{figure*}

\begin{figure*}
\centering
\includegraphics[page=9,width=0.95\textwidth]{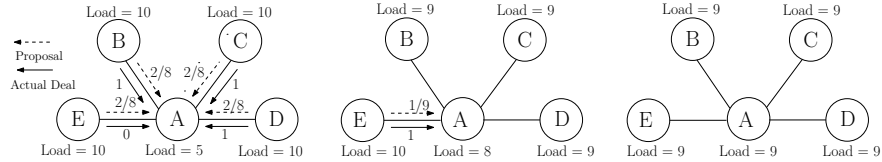}
\caption{Example of Deal-Agreement Based Load Balancing Algorithm}
\end{figure*}

\subsection{Related Work}
\label{s:relatedwork}
\remove{
\textcolor{red}{Token Distribution Problem \cite{DBLP:conf/focs/PelegU86,DBLP:journals/algorithmica/HeideOW96} is a variation of static load balancing problem, in which each node holds some arbitrary tokens and the node must exchange tokens as rapidly as possible so that they all end up with approximately the same number of tokens on each node.}}

In this section, we review some related results for load balancing dealing with diffusion, matching, and random-walk based algorithms only.

\begin{itemize}
    \item \textbf{Continuous Diffusion. }Continuous load balancing is an ``ideal'' case (continuous case of Rabani et al.\ \cite{DBLP:conf/focs/RabaniSW98}) where the load can be divided arbitrarily. Therefore, it is possible to balance the load perfectly. Muthukrishnan et al.\ \cite{DBLP:conf/spaa/GhoshMS96} refers the above continuous diffusion model discussed in Rabani et al.\ \cite{DBLP:conf/focs/RabaniSW98}, as the \textit{first order scheme} and extend to \textit{second order scheme} which take $O(\log (Kn))/(1-\lambda))$ time for achieving discrepancy of $O(dn/(1-\lambda))$, where $\lambda$ is the second largest eigenvalue of diffusion matrix, $K$ is initial discrepancy and $n$ is number of node in the graph. Further progress was made by  Rabani et al.\ \cite{DBLP:conf/focs/RabaniSW98} who introduced the so-called \emph{local divergence}, which is an effective way to compute the deviation between actual load and deviation generated by a Markov chain. They proved that the local divergence yields bound on the maximum deviation between the continuous and discrete case for both the diffusion and matching model. They proved that local divergence can be reduced to $O (d  \log (Kn)/(1 - \lambda))$ in $O(\log (Kn))/(1-\lambda))$ rounds for a \textit{d}-regular graph.
    
    \item \textbf{Discrete Diffusion. }In discrete load balancing only non-divisible loads are allowed to transfer. In this case, the graph can not be completely balanced. Akbari et al.\ \cite{DBLP:conf/podc/AkbariBS12} discussed a randomized and discrete load balancing algorithm for general graph that balances the load up to a discrepancy of $O(\sqrt{d \log n})$ in $O (\log (Kn)/(1 - \lambda))$ time, where $d$ in maximum degree, $K$ is the initial discrepancy, and $\lambda$ is the second-largest eigenvalue of the diffusion matrix. Using the algorithm of Akbari et al.\ \cite{DBLP:conf/podc/AkbariBS12}, discrepancy reduces for other topology also. Such as, for hypercubes, discrepancy reduces to $O( \log n)$ and for expanders and torus graph reduces to $O(\sqrt{\log n})$.
    
 \end{itemize}

\enlargethispage{\baselineskip}

\textit{Randomized diffusion} based algorithms \cite{DBLP:conf/soda/BerenbrinkCFFS11,DBLP:conf/podc/AkbariBS12,DBLP:conf/soda/FriedrichGS10} are algorithms in which
 every node distributes its load as evenly as possible among its neighbors and itself. If the remaining load is not possible to distribute without dividing some load then the node redistributes the remaining loads to its neighbors randomly. In Berenbrink et al.\ \cite{DBLP:conf/soda/BerenbrinkCFFS11} authors show randomized diffusion based discrete load balancing algorithm for which discrepancy depends on the expansion property of the graph (e.g., $\lambda$ which is the second-largest eigenvalue of the diffusion matrix and $d$ maximum degree of any node in the graph). Other \textit{Quasi-random diffusion} based algorithm for general graph \cite{DBLP:conf/soda/FriedrichGS10} considered \textit{bounded-error property}, where sum of rounding errors on each edge is bounded by some constant time at all time. The discrepancy results of the randomized algorithms in  \cite{DBLP:conf/soda/BerenbrinkCFFS11,DBLP:conf/soda/FriedrichGS10} are further improved to $O(d^2 \sqrt{\log n})$ and $O(d \sqrt{\log n})$ respectively by applying the results from \cite{DBLP:conf/focs/SauerwaldS12}, where tighter bounds are obtained for certain graph parameters used in discrepancy bounds of \cite{DBLP:conf/soda/BerenbrinkCFFS11,DBLP:conf/soda/FriedrichGS10}.

One of the alternatives to diffusion is the \emph{matching} approach, also known as the dimension exchange model. In this model for each step, an arbitrary matching of nodes is given, and two matched nodes balance their loads. Friedrich et al.\ \cite{DBLP:conf/stoc/FriedrichS09} considered a dimension exchange algorithm for the matching model. There every node which is connected to a matching edge computes the load difference over that edge. If the load difference is positive, the load moves between the nodes of that edge. This work reduces the discrepancy to $O \left( \sqrt{ \log^3 n / (1- \lambda)} \right)$ in $O(\log (Kn)/ (1-\lambda))$ steps with high probability for general graph. This result \cite{DBLP:conf/stoc/FriedrichS09} is improved by Sauerwald et al.\ \cite{DBLP:conf/focs/SauerwaldS12} where they achieve constant  discrepancy in $O(\log(Kn)/(1-\lambda))$ steps with high probability for regular graph in the random matching model. The constant is independent of the graph and the discrepancy $K$. The random matching model is an alternative model of balancing circuit, where random matching generated in each round. Additionally, the deterministic matching algorithm of \cite{DBLP:conf/wdag/FeuilloleyHS15} achieves a constant final discrepancy of $O((\log n)^ \epsilon)$ for an arbitrarily small constant $\epsilon > 0$, but after a large number of rounds, depending cubically on the initial discrepancy where $K \geq n$. 

\begin{table}[!b]
\label{t:table} 
\caption{Comparison of different Load Balancing algorithms for the \emph{general graph} (\emph{where $n$ is the number of nodes, $d$ is the maximum degree of a node, $K$ is the initial discrepancy, $D$ is the graph diameter, $\lambda$ is the second highest eigenvalue of the diffusion matrix, $\alpha$ is the edge expansion value of the graph, $\epsilon > 0$ is an arbitrarily small constant, D' represent Deterministic algorithm, R' represent Randomized algorithm, I represent Discrete(Integer), C represent Continuous})}
\label{tab:commands}

\scalebox{0.715}{
\begin{tabular}{l*{6}{c}r}
\hline
 Algorithms & Type & Approach & Transfer & Final discrepancy & Rounds & Anytime  \\
(References) &  & (Diffusion/ & (I/C) & ($L_{max} - L_{min}$) & (Steps) &    \\
 & &Matching) & & & & \\
\hline
  \textit{Synchronous Model}& & & & & & \\
   \hline
    Rabani et al. \cite{DBLP:conf/focs/RabaniSW98}& D'  & Diffusion/Matching & I & $ O\left( \frac{d \log n}{1-\lambda} \right)  $ & $O \left( \frac{\log (Kn)}{1-\lambda}\right)$ & No   \\
    
     &  &  & C  & \emph{$\epsilon$} & $ O\left(\frac{\ln (Kn^2/\epsilon)}{(1-\lambda)}\right)$ & No   \\
    
\hline
Muthukrishnan  & D' & Diffusion & C & $O \left( \frac{ dn}{1 - \lambda}\right)$  & $O \left( \frac{ \log (Kn)}{1 - \lambda}\right)$ & No \\
et al. \cite{DBLP:conf/spaa/GhoshMS96} & & & & & & \\

\hline
    Akbari et al. \cite{DBLP:conf/podc/AkbariBS12} & R' & Diffusion & I &$O(\sqrt{d \log n})$ \emph{w.h.p}&$O \left( \frac{d  \log (Kn)}{1 - \lambda}\right)$ & No \\
    
    Berenbrink et al. \cite{DBLP:conf/soda/BerenbrinkCFFS11} & R' & Diffusion & I & $O \left(d \sqrt{\log n} +\sqrt{ \frac{d  \log n  \log d}{1 - \lambda}}\right)$ & $O \left( \frac{\log (Kn)}{1 - \lambda}\right)$ & No \\
    
    & & & & \emph{w.h.p} & & \\

    Friedrich et al. \cite{DBLP:conf/soda/FriedrichGS10} & R' & Diffusion & I & $ O\left( \frac{d \log \log n}{(1- \lambda)}  \right)  $ \emph{w.h.p} & $O \left( \frac{\log (Kn)}{1 - \lambda}\right)$  & No \\
\hline    
 Friedrich et al. \cite{DBLP:conf/stoc/FriedrichS09} & R' & Matching & I  & $ O\left( 
\sqrt{\frac{\log^3 n}{(1- \lambda)} }\right)$ \emph{w.h.p} & $O \left( \frac{\log (Kn)}{1 - \lambda}\right)$ & No \\   
  
 Sauerwald et al. \cite{DBLP:conf/focs/SauerwaldS12}& R' & Matching & I &   Constant \emph{w.h.p} &  $O \left( \frac{ \log (Kn)}{1 - \lambda}\right)$  & No\\  
\hline   
 Feuilloley et al. \cite{DBLP:conf/wdag/FeuilloleyHS15}& D' & Matching & I  & Constant & $O(K^3 d^K)$ & No \\
&  &  & C & Constant & $O(K^3 \log d)$ & No  \\   
      
\hline
   Elsässer et al. \cite{DBLP:journals/jgaa/ElsasserMS06}&R' &Diffusion+ & I  & Constant & $O\left((\log K) + \frac{(\log n)^2}{(1- \lambda)}\right)$  & No \\
& &Random Walk& & & \emph{w.h.p}&\\

Elsässer et al. \cite{DBLP:conf/podc/ElsasserS10}&R' & Diffusion+ & I  & Constant & $O \left( \frac{\log (Kn)}{1 - \lambda}\right)$ \emph{w.h.p} & No\\
&&Random Walk&&&&\\

Elsässer et al. \cite{DBLP:conf/podc/ElsasserS10}&R' & Diffusion+ & I  & Constant & $O(D \log n)$ \emph{w.h.p} & No\\
&&Random Walk&&&&\\

\hline
Our Algorithm \ref{algorithm:new_cont}& D'  & Deal-Agreement based  & C & $\epsilon$  & $O(nD \log(nK/\epsilon))$ & Yes \\
& & Generalized & & & & \\
& & Matching& & & & \\

Our Algorithm \ref{algorithm:new_disc}& D'  & Deal-Agreement based & I & 1-Balanced &  $O(n D  \log(n K) + n D^2)$ & Yes\\
& & Generalized & & & & \\
& & Matching& & & & \\

Our Algorithm \ref{algo:a1}& D'  & Deal-Agreement based  & I & 1-Balanced & $O(nK^2)$ & Yes \\
& & Diffusion& & & & \\
\hline

    \textit{Asynchronous Model}& & & & & & \\
   \hline
     \textit{Partially Async.:}& & & & & & \\
     J. Song \cite{DBLP:conf/ipps/Song93}& D'  & Diffusion &  C & $ \lceil D/2 \rceil $ &  -  & No \\
   
   \textit{Fully Asynchronous:}& & & & & & \\
   
Aiello et al. \cite{DBLP:conf/stoc/AielloAMR93}& D'  & Matching & I&$O \left( \frac{d^2 \log n}{\alpha}\right)$  & $O(K / \alpha)$ & No \\
Ghosh et al. \cite{DBLP:conf/stoc/GhoshLMMPRRTZ95}& D'  & Matching & I & $O \left( \frac{d^2 \log n}{\alpha}\right)$  & $O(K / \alpha)$ & No \\
Our Algorithm \ref{algo:a4}& D'  & Diffusion & I & 1-Balanced & $O(nK^2)$ & Yes\\
  \hline
 \textit{Self-Stabilizing Algo.}\\
  \hline
  Flatebo et al. \cite{504130} & D' & - & I & - & - & No\\
  \hline

\end{tabular}
}
\end{table}

Elsässer et al.\ \cite{DBLP:journals/jgaa/ElsasserMS06,DBLP:conf/podc/ElsasserS10} propose load balancing algorithms based on \emph{random walks}. This approach is more complicated than simple diffusion based load balancing. In these algorithms the final stage uses concurrent random walk to reduce the maximum individual load. Load transfer on an edge may be smaller or larger than \textit{(Load Difference)}/(\textit{degree + 1}). Elsässer et al.\ \cite{DBLP:journals/jgaa/ElsasserMS06} present an approach achieving a constant discrepancy after $O((\log K) + (\log n)^2(1- \lambda))$ steps, where $\lambda$ is the second largest eigenvalue of the diffusion matrix and $K$ is initial discrepancy. Elsässer et al. \cite{DBLP:conf/podc/ElsasserS10} presented a constant discrepancy algorithms which is an improvement of the algorithm of \cite{DBLP:journals/jgaa/ElsasserMS06} and takes $O \left( \frac{\log (Kn)}{1 - \lambda}\right)$ \emph{w.h.p.} time.

\emph{Asynchronous} load balancing algorithm, where local computations and messages may be delayed but each message is eventually delivered and each computation is eventually performed. Aiello et al.\ \cite{DBLP:conf/stoc/AielloAMR93} and Ghosh et al.\ \cite{DBLP:conf/stoc/GhoshLMMPRRTZ95} introduced matching based asynchronous load balancing algorithms with the restriction that in each round only one unit load transfers between two nodes. By this restriction, these asynchronous load balancing algorithms require more time to converge. These algorithms \cite{DBLP:conf/stoc/AielloAMR93,DBLP:conf/stoc/GhoshLMMPRRTZ95} suggest turning the asynchronous setting into synchronous by appropriately enlarging the time unit.

The literature on \emph{ Self-stabilizing Token Distribution Problem} \cite{DBLP:conf/opodis/SudoDLM18,DBLP:conf/sirocco/SudoDLM18}, where any number of tokens are arbitrarily distributed and each node has an arbitrary state. The goal is that each node has exactly $k$ tokens after the execution of algorithm. Sudo et al.\ \cite{DBLP:conf/opodis/SudoDLM18,DBLP:conf/sirocco/SudoDLM18} introduced the algorithm for asynchronous model and rooted tree networks, where root can push/pull tokens to/from the external store and each node knows the value of $k$.

Two self-stabilizing algorithms for transferring the load (task) around the network are presented by Flatebo et al.\ \cite{504130}. These algorithms are in terms of a new \emph{task received} from the environment that triggers \emph{send task} or \emph{start task}, rather than being activated when no task is received, which is our scope here.

A comparison of our results with existing results is given in Table \ref{tab:commands}. 

\section{
Single Proposal Load Balancing Algorithms}
\label{s:concentrated}

This section presents two \emph{single proposal} monotonic synchronous local algorithms for the load balancing problem and their analysis. The continuous version works in time $O(n D \log(n K/\epsilon))$, while the discrete version works in time $O(n D \log(n K/D) + n D^2)$.

\subsection{Continuous Algorithm and Analysis}

\begin{algorithm}
\caption{Single Proposal: Continuous Version}
\label{algorithm:new_cont}
\KwIn{An undirected graph $G = (V,E, load)$ } 

\KwOut{Graph $G$ with discrepancy at most $\epsilon$}

         \SetKwFunction{FMain}{}
         \SetKwProg{Fn} {Execute forever} { do}{}
         \DontPrintSemicolon
          \Fn{}{
                \For{every node $u$} { 
                \If{$u$ has at least one neighbor with a strictly smaller load}
                {
                Find the neighbor, $v$, with the maximal difference $load(u) - load(v)$ (break ties arbitrarily)\\
                $u$ sends to $v$ a transfer proposal of $p_{uv}=(load(u) - load(v))/2$\\
                }
                }
                \For{every node $u$} {
                \If{there is at least one transfer proposal to $u$}{
                Find a neighbor, $w$, proposing to $u$ the transfer of maximum value, $p_{wu}$\\
                Node $u$ makes a deal: increases its load by $p_{wu}$ and informs node $w$ on accepting its proposal\\
                }
                }
                \For{every node $u$} {
                Node $u$ updates its load w.r.t.\ the deal issued by it and the deal made on its proposal, if any, and sends the current value of $load(u)$ to every its neighbor\\
                }
            }

\end{algorithm}

The execution of Algorithm~\ref{algorithm:new_cont} is composed of three-phase rounds. At the first phase ``proposal'' of any round, every node sends a transfer proposal to one of its neighbors with the smallest load, if any. At the second phase ``deal'' of any round, every node accepts a single proposal sent to it, if any. 
At the third phase ``summary'', each node $u$ sends the value of its updated load to all its neighbors, resulting from the agreed transfers to and from it, if any. During the execution of the algorithm, we break the ties arbitrarily.

The rest of this sub-section is devoted to the analysis of Algorithm~\ref{algorithm:new_cont}. 
Let $L_{avg}$ be the average value of $load$ over $V$; note that this value does not change as a result of load transfers between nodes. Let us introduce \emph{potentials} 
as follows. We define $p(u)=(load(u)-L_{avg})^2$ be the potential of node $u$, and $p(G) = \sum_{u \in V} p(u)$ be the potential of $G$. 
Let us call a transfer of load $l$ from node $u$ to its neighbor $v$ \emph{fair} if $load(u)-load(v) \geq 2l$.

\begin{lemma}
	\label{l:fair}
Any fair transfer of load $l$ decreases the graph potential by at least $2l^2$.
\end{lemma}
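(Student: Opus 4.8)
The plan is to compute the change in the graph potential caused by a single fair transfer directly, exploiting the fact that transferring load between $u$ and $v$ alters the loads of only those two nodes. First I would record that $L_{avg}$ is invariant under any load transfer (it is the average, and the total load is conserved), so every node other than $u$ and $v$ retains its load and therefore its potential. Consequently $p(G)$ changes only through the two terms $p(u)$ and $p(v)$, and it suffices to track these.

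Next I would introduce the shorthand $x = load(u) - L_{avg}$ and $y = load(v) - L_{avg}$, so that before the transfer the two affected nodes contribute $x^2 + y^2$ to $p(G)$. After transferring $l$ units from $u$ to $v$, their loads become $load(u) - l$ and $load(v) + l$, so their combined contribution becomes $(x-l)^2 + (y+l)^2$. Expanding and cancelling gives $x^2 + y^2 + 2l^2 - 2l\,(x-y)$, whence the net change is $\Delta p(G) = 2l^2 - 2l\,(x - y)$. Since $x - y = load(u) - load(v)$, the decrease in potential equals exactly $2l\,(load(u) - load(v)) - 2l^2$.

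Finally I would invoke the fairness hypothesis $load(u) - load(v) \ge 2l$ together with $l \ge 0$ to obtain $2l\,(load(u) - load(v)) \ge 4l^2$, so that the decrease is at least $4l^2 - 2l^2 = 2l^2$, which is the claimed bound.

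There is no genuine obstacle here: the statement reduces to an elementary one-variable quadratic estimate. The only points deserving a little care are (i) the observation that precisely two potential terms change while all others are untouched, which rests on the invariance of $L_{avg}$, and (ii) keeping the sign conventions consistent so that \emph{fairness} translates into the correct direction of the inequality. It is worth noting that the factor $2$ built into the definition of a fair transfer is exactly what makes the linear term $2l\,(load(u)-load(v))$ dominate the quadratic penalty $2l^2$; with any smaller threshold the decrease could fail to be controlled.
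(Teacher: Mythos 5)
Your proof is correct and follows essentially the same direct computation as the paper: expand the two affected potential terms, observe that the change is $2l^2 - 2l\,(load(u)-load(v))$, and apply the fairness hypothesis $load(u)-load(v)\ge 2l$. The only (harmless) difference is that you explicitly justify why all other terms and $L_{avg}$ are unchanged, which the paper leaves implicit.
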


\begin{proof}
Consider the state before a fair transfer of load $l$ from $u$ to $v$. Denote $a=load(v)$, $a' = load(u) \geq a+2l$. 
	The potential decrease by the transfer will be
	$ (a'^2 - (a'-l)^2) + (a^2 - (a+l)^2) = l (2a'-l) - l(2a+l) $ 
	$= l (2(a'-a)-2l) \geq 2l (2l-l) = 2l^2, $
	as required. 
\end{proof}

Note that all transfers in Algorithm~\ref{algorithm:new_cont} are fair. Hence the potential of graph $G$ never increases, implying that its decreases only accumulate.

\begin{lemma}
	\label{l:round_decrease}
	If the discrepancy of $G$ at the beginning of some round is $K$, the potential of $G$ decreases after that round by at least $K^2/2D$.
\end{lemma}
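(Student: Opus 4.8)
The plan is to bound the potential drop of the whole round from below by a sum over the individual accepted transfers, and then to lower-bound that sum using the diameter together with the Cauchy--Schwarz inequality. Each accepted transfer moves exactly half of the start-of-round load difference across its edge, so Lemma~\ref{l:fair}, read with equality, says that a transfer across an edge of load difference $\delta$ lowers the potential by exactly $\delta^2/2$. Before adding these contributions I would check that concurrent transfers at a shared node do not interfere destructively: a node may send on one edge and receive on another in the same round, so its net load change $\Delta_u$ is a signed sum of incident transfer amounts. Expanding the change of $p(G)=\sum_u(load(u)-L_{avg})^2$ in terms of the $\Delta_u$ and using $\sum_u\Delta_u=0$, the only discrepancy from the naive sum of individual drops is the quadratic term $\sum_u\Delta_u^2$; since $(t_1-t_2)^2\le t_1^2+t_2^2$ for nonnegative amounts $t_1,t_2$, this term is no larger than what the independent bound assumes, so the true round drop is at least $\frac12\sum\delta^2$ taken over the accepted transfers.

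Next I would exploit the diameter to produce a controlled family of steep edges. Let $w_0$ be a node of load $L_{max}$ and $w_m$ a node of load $L_{min}$, joined by a shortest path $w_0,w_1,\dots,w_m$ with $m\le D$. Telescoping gives $\sum_{i=1}^{m}(load(w_{i-1})-load(w_i))=K$, so the positive edge-drops $\delta_i=\max\{load(w_{i-1})-load(w_i),\,0\}$ satisfy $\sum_i\delta_i\ge K$, and there are at most $D$ of them. For each steep edge (one with $\delta_i>0$) the higher endpoint $w_{i-1}$ has a strictly smaller-loaded neighbor and therefore makes a proposal; as it proposes to its maximum-difference neighbor, its proposal value is at least $\delta_i/2$, and since the receiver accepts the largest proposal offered to it, some accepted transfer has value at least $\delta_i/2$, i.e.\ runs across an edge of difference at least $\delta_i$. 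Each steep edge is thus witnessed by an accepted transfer of squared difference at least $\delta_i^2$. A Cauchy--Schwarz step over the at most $D$ steep edges gives $\sum_i\delta_i^2\ge(\sum_i\delta_i)^2/D\ge K^2/D$, which combined with the first paragraph yields a round drop of at least $K^2/(2D)$.

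The hard part is the final bookkeeping: the witnessing transfers must be distinct, whereas in principle several steep edges could, through the maximum-difference rule, point to the \emph{same} receiver, so that one accepted transfer is charged for several $\delta_i$. I would control this using that the chosen path is a \emph{shortest} one, so any node is adjacent to at most three path vertices (a longer adjacency would give a shortcut); this caps each receiver's multiplicity by a constant, and since the maximum of several values dominates their quadratic mean, $\sum_i\delta_i^2$ is recovered up to that constant. This already proves the bound up to a constant factor, which is all the convergence analysis needs; extracting the exact constant $1/(2D)$ would require an injective charging, e.g.\ thinning the path so that the witnessing receivers are pairwise distinct. I expect this avoidance of double counting, rather than the potential computation itself, to be the delicate point. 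The bound is essentially tight: on a path of $D+1$ nodes with linearly decreasing loads, every edge transfers, no double counting occurs, and the round drop equals exactly $K^2/(2D)$.
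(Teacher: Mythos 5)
Your overall route is the same as the paper's: take a shortest path from the minimum-load node to the maximum-load node, telescope to get steep edges whose positive differences $\delta_i$ sum to at least $K$ over at most $D$ edges, observe that each steep head proposes at least $\delta_i/2$ to its minimum neighbor and that the receiver accepts a proposal at least that large, and finish with a sum-of-squares (Cauchy--Schwarz) step. One point where you are actually \emph{more} careful than the paper: your first paragraph explicitly verifies that concurrent transfers compose, by expanding $p(G)$ in the net changes $\Delta_u$ and bounding $\sum_u\Delta_u^2\le 2\sum_{\text{transfers}}l^2$ via $(t_1-t_2)^2\le t_1^2+t_2^2$; the paper simply sums the per-transfer drops of Lemma~\ref{l:fair} without justifying additivity. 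That check is correct and worth having.

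The genuine gap is exactly where you flag it: the shared-receiver case. Your multiplicity bound (a node off a shortest path is adjacent to at most three consecutive path vertices, hence each receiver is charged at most $3$ times) is sound, but it only yields $K^2/(6D)$, not the stated $K^2/(2D)$, so as written you have not proved the lemma as stated. Moreover, the repair you sketch --- ``thinning the path so that the witnessing receivers are pairwise distinct'' --- does not work if it means simply discarding all but one of the steep edges pointing to a shared receiver $w$: the discarded edges carry part of the telescoping sum, so $\sum_i\delta_i\ge K$ is lost. The paper's fix is a genuine contraction rather than a deletion: it replaces the whole sub-sequence of $S$ between the first and last edges proposing to $w$ by the single new edge $(w,v_{i_{\bar j}})$, where $v_{i_{\bar j}}$ is the maximally loaded proposer to $w$, and then re-verifies that the load intervals of the new sequence still cover $[L_{min},L_{max}]$ (so the differences still sum to at least $K$) and that each head still proposes at least half its edge difference. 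Iterating this contraction reaches the distinct-receiver case with the exact constant. Without that step (or an equivalent injective charging that preserves the sum), your argument should be stated as proving the lemma with constant $1/(6D)$, which does suffice for the downstream $O(nD\log(nK/\epsilon))$ bound but is not the claim being proved.
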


\begin{proof}
\begin{sloppypar*}
	Consider an arbitrary round. Let $x$ and $y$ be nodes with load $L_{max}$ and $L_{min}$, respectively, and let $P$ be a \emph{shortest} path from $y$ to $x$, $P=(y=v_0, v_1, v_2, \dots, v_k=x)$. Note that $k \le D$. Consider the sequence of edges $(v_{i-1}, v_i)$ along $P$, and choose its sub-sequence $S$ consisting of all edges with $\delta_i = load(v_i) - load(v_{i-1}) > 0$.  Let $S = (e_1=(v_{i_1 -1}, v_{i_1}), e_2=(v_{i_2 -1}, v_{i_2}), \dots,  e_{k'}=(v_{i_{k'} -1}, v_{i_{k'}}))$, $k' \le k \le D$. 
	Observe that by the definition of $S$, interval $[L_{min}, L_{max}]$ on the load axis is covered by intervals $[load(v_{i_j - 1}), load(v_{i_{j-1}})]$, since 
	\emph{$load(v_{i_1 -1})=L_{min}$, $load(v_{i_{k'}})=L_{max}$, and for any $2 \le j \le k'$, $load(v_{i_{j-1}}) \ge load(v_{i_j - 1})$} (we call this $property~1$). As a consequence, the sum of load differences $\sum_{j=1}^{k'} \delta_{i_j}$ over $S$ is at least $L_{max} - L_{min} = K$. 
	
	Since for every node $v_{i_j}$, its neighbor $v_{i_{j}-1}$ has a strictly lesser load, the condition of the first \textbf{if} in Algorithm~\ref{algorithm:new_cont} is satisfied for each $v_{i_{j}}$. Thus, each $v_{i_{j}}$ proposes a transfer to its minimally loaded neighbor; denote that neighbor by $w_j$. Note that the transfer amount in that proposal is at least $\delta_{i_j}/2$. 
	Hence, \emph{the sum of load proposals issued by the heads of edges in $S$ is at least $K/2$} (we call this $property~2$).
	By the algorithm, each node $w_i$ accepts the biggest proposal sent to it, which value is at least $\delta_{i_j}/2$. 
	
	Consider the simple case when all nodes $w_j$ are different. Then by Lemma~\ref{l:fair}, the total decrease of the potential at the round, $\Delta$, is at least $\sum_j 2(\delta_{i_j}/2)^2$. 
	By simple algebra, for a set of at most $D$ numbers with a sum bounded by $K$, the sum of numbers' squares is minimal if there are exactly $D$ equal numbers summing to $K$. We obtain $\Delta \ge D \cdot 2 (K/2D)^2 = K^2/2D$, as required.
	
	Let us now reduce the general case to the simple case as above. Suppose that several nodes $v_{i_{j}}$ proposed a transfer to same node $w$. Denote by $v_{i_{j'}}$ the first such node along $P$, by $v_{i_{j''}}$ the last one, and by $v_{i_{\bar j}}$ the node with the maximal load among them. 
	Let us shorten $S$ by replacing its sub-sequence from $e_{j'}$ to $e_{j''}$ by the single edge $\bar e=(w, v_{i_{\bar j}})$, with $\bar \delta = load(v_{i_{\bar j}})-load(w) > 0$; we denote the result of the shortcut by $\bar S$. 
	Let us show that $\bar S$ obeys property~1.
	Since  $v_{i_{j'}}$ proposed to $w$, we have $load(v_{i_{j'}-1}) \ge load(w)$. By property~1 for $S$, $load(v_{i_{j'-1}}) \ge load(v_{i_{j'} - 1})$, and hence $load(v_{i_{j'-1}}) \ge load(w)$, as required.
	By the choice of $v_{i_{\bar j}}$, we have $load(v_{i_{\bar j}}) \ge load(v_{i_{j''}})$. By property~1 for $S$, $load(v_{i_{j''}}) \ge load(v_{i_{j''+1} - 1})$, and hence $load(v_{i_{\bar j}}) \ge load(v_{i_{j''+1} - 1})$, as required.
	As in the analysis of $S$, the sum of load differences along $\bar S$ is at least $K$, and the transfer value in the proposal of the head of each edge in $\bar S$ is at least a half of the load difference along that edge; Property~2 for $\bar S$ follows. Now, node $w$ gets a proposal from $v_{i_{\bar j}}$ only, among the heads of edges in $\bar S$. If there are more nodes $w_j$ with several proposals from heads of edges in $\bar S$, we make further similar shortcuts of $\bar S$. In this way, we eventually arrive at the simple case obeying both properties~1 and 2, which suffices. 
\end{sloppypar*}
\end{proof}

\begin{lemma}
	\label{l:total_decrease}
	For any positive $\beta$, after at most $(2nD+1) \ln(\lceil nK^2/\beta \rceil)$ rounds, the potential of $G$ will permanently be at most $\beta$.
\end{lemma}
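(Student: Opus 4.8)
The plan is to turn the \emph{additive} per-round decrease of Lemma~\ref{l:round_decrease} into a \emph{multiplicative} (geometric) decay of the potential, and then sum the resulting geometric series. The bridge between the two is an upper bound on the potential in terms of the current discrepancy.

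First I would bound the potential from above by the discrepancy. At any moment every load lies in $[L_{min}, L_{max}]$, and $L_{avg}$ lies between $L_{min}$ and $L_{max}$, so $|load(u) - L_{avg}| \le L_{max} - L_{min} = K'$, where $K'$ is the \emph{current} discrepancy. Summing over the $n$ nodes gives $p(G) \le n (K')^2$, equivalently $(K')^2 \ge p(G)/n$. In particular, at the start of the execution $p(G) \le nK^2$.

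Next I would combine this with Lemma~\ref{l:round_decrease}. Consider a round whose starting discrepancy is $K'$ and starting potential is $\Phi = p(G)$. The lemma gives a decrease of at least $(K')^2/(2D) \ge \Phi/(2nD)$, so the potential at the end of the round is at most $\Phi\,(1 - 1/(2nD))$. Thus each round shrinks the potential by a fixed factor, and after $t$ rounds $p(G) \le nK^2\,(1 - 1/(2nD))^t$. Applying the elementary inequality $1 - x \le e^{-x}$ and solving $nK^2 e^{-t/(2nD)} \le \beta$ for $t$ yields a bound of order $nD\ln(nK^2/\beta)$; tracking the inequalities carefully (writing $(1-1/(2nD))^{2nD+1} \le e^{-1}$ and setting $N = \lceil nK^2/\beta\rceil$ so that $nK^2 \le N\beta$) reproduces the stated constant $(2nD+1)\ln\lceil nK^2/\beta\rceil$. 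Finally, the word ``permanently'' comes for free: since every transfer in Algorithm~\ref{algorithm:new_cont} is fair, the potential never increases (Lemma~\ref{l:fair}), so once it drops to at most $\beta$ it stays there.

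The one genuinely substantive step is the potential-versus-discrepancy inequality $p(G) \le n(K')^2$, which is exactly what converts the additive guarantee of Lemma~\ref{l:round_decrease} into the self-referential recurrence $\Phi_{\mathrm{new}} \le \Phi\,(1-1/(2nD))$; everything after that is the routine solution of a geometric recursion. The only mild bookkeeping nuisance is pinning down the exact multiplicative constant and the ceiling/``$+1$'' appearing in the statement, which arise purely from how cleanly one applies $1-x \le e^{-x}$ and warrant a quick verification but carry no real difficulty.
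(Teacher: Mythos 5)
Your proof is correct and follows essentially the same route as the paper's: bound $p(G)\le n(K')^2$ via the current discrepancy, combine with Lemma~\ref{l:round_decrease} to get the multiplicative per-round factor $1-1/(2nD)$, unwind the geometric decay with $1-x\le e^{-x}$, and invoke fairness of transfers for permanence. No gaps.
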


\begin{proof}
	Consider any state of $G$, with discrepancy $K$. The potential of each node is at most $K^2$; hence, the potential of $G$ is at most $n K^2$. By Lemma~\ref{l:round_decrease}, the relative decrease of $p(G)$ at any round is at least by a factor of $1-\frac{K^2/2D}{n K^2} = 1-\frac 1{2n D}$. Therefore, after $2nD+1$ rounds, the relative decrease of $p(G)$ will be at least by a factor of $e$. Hence, after $(2nD+1) \ln(\lceil nK^2/\beta \rceil)$ rounds, beginning from the initial state, the potential will decrease by a factor of at least $\lceil nK^2/\beta \rceil$. Thus, it becomes at most $\beta$ and will never increase, as required.
\end{proof}

\begin{lemma}
	\label{l:monotonic}
	As the result of any single round of Algorithm~\ref{algorithm:new_cont}$:$ 
	\begin{enumerate}
	    \item If node $u$ transferred load to node $v$, $load(v)$ does not become greater than $load(u)$ at the end of round.
	    \item No node load strictly decreases to $L_{min}$ or below and no node load strictly increases to $L_{max}$ or above. Therefore, the algorithm is monotonic.
	    \item The load of at least one node with load $L_{min}$ strictly increases and the load of at least one node with load $L_{max}$ strictly decreases.
	\end{enumerate}
\end{lemma}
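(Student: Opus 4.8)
The plan is to read off two structural facts about a single round directly from the rules of Algorithm~\ref{algorithm:new_cont}, and then let these facts drive all three parts. First, each node sends at most one proposal (to its chosen strictly-smaller-load neighbor) and accepts at most one proposal, so the net change of $load(u)$ over a round is $g_u-s_u$, where $g_u\ge 0$ is the amount $u$ receives from the single proposal it accepts (if any) and $s_u\ge 0$ is the amount $u$ sends, namely $p_{uv}$ if $u$'s proposal to its chosen neighbor $v$ is accepted and $0$ otherwise. Second, a node of load $L_{max}$ receives no proposal (a proposer must have a strictly larger load than the recipient, which is impossible here), and a node of load $L_{min}$ sends no proposal (it has no strictly smaller neighbor). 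I assume throughout that $G$ is connected, and for part~3 that $K>0$ (otherwise no node proposes and there is nothing to show).

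For part~1, suppose $u$ transferred $p_{uv}=(load(u)-load(v))/2$ to $v$; then $s_u$ contributes this $p_{uv}$ and $g_v$ equals it. Writing the end-of-round loads as $load(u)-p_{uv}+g_u$ and $load(v)+p_{uv}-s_v$, the difference telescopes: $load_{end}(u)-load_{end}(v)=(load(u)-load(v))-2p_{uv}+g_u+s_v=g_u+s_v\ge 0$, using $2p_{uv}=load(u)-load(v)$. Thus $v$ never overtakes $u$, and the extra in-flow to $u$ and out-flow from $v$ only help.

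For part~2 I would argue by which nodes can move. A node $u$ whose load \emph{strictly decreases} must have had its proposal to its chosen neighbor $v$ accepted, so $load_{end}(u)\ge load(u)-p_{uv}=(load(u)+load(v))/2>load(v)\ge L_{min}$, since $load(u)>load(v)$. Symmetrically, a node $v$ whose load \emph{strictly increases} must have accepted a proposal from some $w$ with $g_v=p_{wv}$, so $load_{end}(v)\le load(v)+p_{wv}=(load(v)+load(w))/2<load(w)\le L_{max}$. Hence no load falls to $L_{min}$ or below and none rises to $L_{max}$ or above, so the new minimum is $\ge L_{min}$ and the new maximum is $\le L_{max}$; combined with the construction (every transfer goes from a higher- to a lower-loaded node) and preservation of $L_{min}\ge 0$, this gives monotonicity.

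For part~3 I would use the two structural facts to produce the required extremal nodes. On the max side, pick a node $x$ with $load(x)=L_{max}$ having a strictly smaller neighbor (one exists on the boundary of the max-load set, since $K>0$ and $G$ is connected). Then $x$ proposes $p_{xv^{*}}>0$ to its chosen neighbor $v^{*}$, so $v^{*}$ accepts its largest incoming proposal, from some $w$ with $p_{wv^{*}}\ge p_{xv^{*}}$; this forces $load(w)-load(v^{*})\ge L_{max}-load(v^{*})$, hence $load(w)=L_{max}$. Being max-loaded, $w$ receives nothing, so $load_{end}(w)=load(w)-p_{wv^{*}}<L_{max}$. On the min side, pick a min-load node with a strictly larger neighbor $u$; the minimal-difference rule makes $u$ propose to a neighbor $y''$ of load exactly $L_{min}$, with $p_{uy''}>0$, and $y''$ accepts a positive proposal while sending nothing, so $load_{end}(y'')>L_{min}$. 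The main obstacle is precisely this existence argument: the boundary extremal node need not have \emph{its own} proposal accepted, so the step that actually changes load may be a different node; the resolution is that the acceptor always takes the \emph{maximum} proposal, which pins the accepted proposer's load to $L_{max}$ (respectively forces a genuine $L_{min}$ acceptor), after which the ``max-nodes never receive / min-nodes never send'' fact closes the gap. I would also state the min-side tie-breaking (which $L_{min}$ neighbor $u$ proposes to) explicitly, since that is where the argument is most delicate.
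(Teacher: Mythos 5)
Your proof is correct and follows essentially the same route as the paper's: the telescoping argument for part~1, the ``a strictly decreasing node ends above the old load of its acceptor $\ge L_{min}$'' argument for part~2, and the observation that the maximum-proposal acceptance rule forces the accepted proposer at $w$ to have load exactly $L_{max}$ for part~3 all mirror the paper's proof. Your version is somewhat more explicit about the structural facts (max-loaded nodes receive nothing, min-loaded nodes send nothing) and about the tie-breaking on the $L_{min}$ side, which the paper dispatches with ``similarly,'' but there is no substantive difference in approach.
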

\begin{proof}
	Consider an arbitrary round. By the description of a round, each node $u$ participates in at most two load transfers:  one where $u$ receives a load and one where it transfers a load. Let $u$ transfer load $d>0$ to $v$. Since the transfer is fair, that is $load(u)-load(v) \ge 2d$, the resulting load of $u$ is not lower than that of $v$. The additional effect of other transfers to $u$ and from $v$ at the round, if any, can only increase $load(u)$ and decrease $load(v)$, thus proving the first statement of Lemma.
	
	By the same reason of transfer safety, the new load of $u$ is strictly greater than the old load of $v$, which is at least $L_{min}$. The analysis of load of $v$ and $L_{max}$ is symmetric. We thus arrive at the second statement of Lemma.
	
	For the third statement, consider nodes $x$ and $y$ as defined at the beginning of proof of Lemma~\ref{l:round_decrease}. Denote the node that $x$ proposes a transfer to by $w$. Let node $w$ accept the proposal of node $z$ (maybe $z \neq x$). By the choosing rule at $w$, $load(z) \ge load(x) = L_{max}$, that is $load(z) = L_{max}$. Note that the load of $z$ strictly decreases after the transfer is accepted by $w$. Moreover, no node transfers load to $z$, since the load of $z$ is highest among all of the nodes.
	We consider the load changes related to $y$ similarly, thus proving the third statement.
\end{proof}

\begin{theorem}
\begin{sloppypar*}
	\label{t:algorithm_cont}
	Algorithm~\ref{algorithm:new_cont} is monotonic.
	After  at most $(6n+3) D \ln(\lceil nK^2/(\epsilon^2/2) \rceil) = O(nD \log(nK/\epsilon))$ time of its execution, the discrepancy of $G$ will be at most $\epsilon$.
\end{sloppypar*}	
\end{theorem}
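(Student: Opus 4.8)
The plan is to split the statement into its two assertions and dispatch each by appeal to the groundwork already laid. Monotonicity is essentially in hand: it is exactly the content of the second item of Lemma~\ref{l:monotonic}, which guarantees that during every round no load strictly decreases to $L_{min}$ or below and none strictly increases to $L_{max}$ or above. Hence the maximal load never rises, the minimal never falls, the discrepancy never grows, and no negative load is ever created; I would simply cite that item.

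For the time bound, the one genuinely new ingredient is a quantitative link between the graph potential $p(G)$ and the discrepancy, since every decrease estimate (Lemmas~\ref{l:fair}--\ref{l:total_decrease}) is phrased in terms of potential. First I would establish the elementary inequality $p(G) \ge K'^2/2$ whenever the current discrepancy equals $K'$. Let $x$ and $y$ attain $L_{max}$ and $L_{min}$, and put $a = L_{max} - L_{avg} \ge 0$ and $b = L_{avg} - L_{min} \ge 0$, so that $a + b = K'$ because $L_{min} \le L_{avg} \le L_{max}$. Then $p(G) \ge p(x) + p(y) = a^2 + b^2 \ge (a+b)^2/2 = K'^2/2$ by the quadratic-mean/arithmetic-mean inequality. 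Equivalently, once $p(G) \le \beta$ the discrepancy is at most $\sqrt{2\beta}$.

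Next I would choose $\beta = \epsilon^2/2$, so that $\sqrt{2\beta} = \epsilon$, and invoke Lemma~\ref{l:total_decrease} with this $\beta$: after at most $(2nD+1)\ln(\lceil nK^2/(\epsilon^2/2)\rceil)$ rounds the potential is \emph{permanently} at most $\epsilon^2/2$, whence the discrepancy is permanently at most $\epsilon$. Here the permanence matters, and it comes for free: all transfers in Algorithm~\ref{algorithm:new_cont} are fair, so by the remark following Lemma~\ref{l:fair} the potential is non-increasing, and once it drops below $\epsilon^2/2$ it stays there. Finally I would convert rounds into time by charging $3$ time units per round (the proposal, deal, and summary phases), giving $3(2nD+1)\ln(\cdots) = (6nD+3)\ln(\cdots)$, and bound this by $(6n+3)D\ln(\lceil nK^2/(\epsilon^2/2)\rceil)$ using $6nD+3 \le 6nD+3D$ for $D \ge 1$. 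The asymptotic form $O(nD\log(nK/\epsilon))$ then follows since $\ln\lceil 2nK^2/\epsilon^2\rceil = O(\log(nK/\epsilon))$.

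I expect the potential-to-discrepancy inequality to be the only step of real substance; everything else is bookkeeping that plugs directly into Lemmas~\ref{l:monotonic}, \ref{l:fair}, and~\ref{l:total_decrease}. The mild care points are keeping track of the constant $3$ when passing from rounds to time and noting the monotonicity of the potential so that the final discrepancy bound is maintained from some time onward rather than merely touched once.
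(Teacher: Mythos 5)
Your proposal is correct and follows essentially the same route as the paper's own proof: monotonicity via Lemma~\ref{l:monotonic}, the bound $p(G)\ge K'^2/2$ obtained from the two extremal nodes (your $a^2+b^2\ge (a+b)^2/2$ is the paper's $\ge 2(K/2)^2$ in disguise), and Lemma~\ref{l:total_decrease} with $\beta=\epsilon^2/2$. You even make the rounds-to-time conversion $3(2nD+1)\le(6n+3)D$ more explicit than the paper does.
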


\begin{proof}
\begin{sloppypar*}
Algorithm~\ref{algorithm:new_cont} is monotonic by Lemma~\ref{l:monotonic}. By Lemma~\ref{l:total_decrease}, after  $(6n+3) D \ln(\lceil nK^2/(\epsilon^2/2) \rceil) = O(nD \log(nK/\epsilon))$ time, the potential of $G$ will become permanently be at most $\epsilon^2/2$. In such a state, let $u$ and $v$ be nodes with load $L_{max}$ and $L_{min}$, respectively. Then, the current potential of $G$ will be at least $p(u)+p(v) = (L_{max}-L_{avg})^2 + (L_{min}-L_{avg})^2 \ge 2 (K/2)^2$. Thus, $K^2/2 \le p(u)+p(v) \le p(G) \le \epsilon^2/2$, and the desired bound $K \le \epsilon$ follows.
\end{sloppypar*}
\end{proof}

\subsection{Discrete Algorithm and Analysis}

\begin{algorithm}
\caption{Single Proposal: Discrete Version}
\label{algorithm:new_disc}
\KwIn{An undirected graph $G = (V,E, load)$}

\KwOut{Graph $G$ in a $1$-Balanced state}

         \SetKwFunction{FMain}{}
         \SetKwProg{Fn} {Execute forever} { do}{}
         \DontPrintSemicolon
          \Fn{}{
                \For{every node $u$} { 
                \If {$u$ has at least one neighbor with a load lesser than its own load by at least 2}
                {
                Find the first neighbor in order, $v$, with the maximal difference $load(u) - load(v)$\\
                Node $u$ sends to $v$ a transfer proposal of $p_{uv} = \lfloor (load(u) - load(v))/2 \rfloor$\\
                }
                }
                \For{every node $u$} {
                \If{there is at least one transfer proposal to $u$}{
                Find a neighbor, $w$, proposing to $u$ the transfer of the maximum value, $p_{wu}$\\
                Node $u$ makes a deal: increases its load by $p_{wu}$ and informs node $w$ on accepting its proposal\\
                }
                }
                \For{every node $u$} {
                Node $u$ updates its load w.r.t.\ the deal issued by it and the deal made on its proposal, if any, and sends the current value of $load(u)$ to every its neighbor\\
                }
            }
        
\end{algorithm}

Consider Algorithm~\ref{algorithm:new_disc}. All transfers are fair. Hence, the potential of $G$ never increases.
Algorithm~\ref{algorithm:new_disc} has a structure similar to that of Algorithm~\ref{algorithm:new_cont}. However, the discrete nature of integer loads implies that the final state should be 1-Balanced, and that transfer values should be rounded. Accordingly, the algorithm analysis is different, though based partly on the same ideas and techniques as those in the analysis of Algorithm~\ref{algorithm:new_cont}.

The rest of this sub-section is devoted to the analysis of Algorithm~\ref{algorithm:new_disc}. 
Note that there is at least one transfer proposal at some round of Algorithm~\ref{algorithm:new_disc} if and only if the state of $G$ is not 1-Balanced. Hence, whenever a 1-Balanced state is reached, it is preserved forever. 

\begin{lemma}
	\label{l:round_decrease_disc}
	If the discrepancy of $G$ at the beginning of some round is $K \ge 2D$, the potential of $G$ decreases after that round by at least $K^2/8D$.
\end{lemma}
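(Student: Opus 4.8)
The plan is to mirror the proof of Lemma~\ref{l:round_decrease} as closely as possible and to isolate the two new difficulties caused by the discrete rounding: the proposal value is now $\lfloor(load(u)-load(v))/2\rfloor$ rather than a half, and a node proposes only when some neighbor is lower by at least $2$, so that path edges with load difference exactly $1$ may contribute no transfer at all. As before I would fix a node $x$ of load $L_{max}$ and a node $y$ of load $L_{min}$, take a shortest path $P$ between them ($k\le D$ edges), and let $S=(e_1,\dots,e_{k'})$ be the subsequence of edges with positive difference $\delta_{i_j}=load(v_{i_j})-load(v_{i_j-1})>0$; the covering argument (property~1) still gives $\sum_{j}\delta_{i_j}\ge K$ and $|S|\le D$. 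The reduction to the ``simple case'' of pairwise distinct receiving nodes $w_j$ is carried out by the same shortcut operation as in Lemma~\ref{l:round_decrease}; the only extra point to check is that a merged edge $\bar e=(w,v_{i_{\bar j}})$ still has $\bar\delta\ge2$, which holds because $v_{i_{\bar j}}$ proposed to $w$ and hence met the discrete proposal condition $load(v_{i_{\bar j}})-load(w)\ge2$.

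In the simple case, consider an edge $e_j$ with $\delta_{i_j}\ge2$. Its head $v_{i_j}$ then satisfies the proposal condition, proposes to its least-loaded neighbor $w_j$, and, since $load(w_j)\le load(v_{i_j-1})$, the proposed (hence the accepted) transfer value is at least $\lfloor\delta_{i_j}/2\rfloor$. By Lemma~\ref{l:fair} this transfer lowers the potential by at least $2\lfloor\delta_{i_j}/2\rfloor^2$, and summing these bounds over the distinct $w_j$ is legitimate exactly as in Lemma~\ref{l:round_decrease}. The key elementary estimate I would use to absorb the rounding is
\[
2\lfloor \delta/2\rfloor^2 \;\ge\; \tfrac12(\delta-1)^2,
\]
which follows from $\lfloor\delta/2\rfloor\ge(\delta-1)/2$ and becomes $0\ge0$ at $\delta=1$. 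Writing $\eta_j=\delta_{i_j}-1\ge0$ for every edge of $S$ (so that the troublesome difference-$1$ edges automatically contribute $\eta_j=0$ and are consistently ignored), the round decrease satisfies $\Delta\ge\tfrac12\sum_{j}\eta_j^2$.

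It then remains to bound $\sum_j\eta_j^2$ from below. Since $\sum_j\eta_j=\sum_j\delta_{i_j}-|S|\ge K-D$ and there are at most $D$ terms, convexity (the power-mean inequality) gives $\sum_j\eta_j^2\ge(K-D)^2/D$. Finally I would invoke the hypothesis $K\ge2D$, which yields $K-D\ge K/2$ and hence $(K-D)^2\ge K^2/4$, so that $\Delta\ge\tfrac12\cdot\tfrac{(K-D)^2}{D}\ge\tfrac12\cdot\tfrac{K^2/4}{D}=\tfrac{K^2}{8D}$, as claimed. The main obstacle is bookkeeping the rounding so that the constant comes out to exactly $8D$: the substitution $\eta_j=\delta_{i_j}-1$ is what makes the floor loss and the dropped difference-$1$ edges cancel cleanly, and the condition $K\ge2D$ is precisely what is needed to convert $(K-D)^2$ back into a constant fraction of $K^2$.
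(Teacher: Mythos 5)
Your proposal is correct and follows essentially the same route as the paper's proof: same path/subsequence construction, same shortcut reduction to distinct receivers, same use of Lemma~\ref{l:fair}, and the same power-mean step on at most $D$ terms summing to at least $K-D\ge K/2$. The only (cosmetic) difference is that you absorb the floor term-by-term via $\lfloor\delta/2\rfloor\ge(\delta-1)/2$ before applying convexity, whereas the paper first bounds the total proposal sum by $(K-D)/2\ge K/4$ and then applies the same inequality; the two computations are algebraically identical and yield the same constant $K^2/8D$.
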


\begin{proof}
\begin{sloppypar*}
	In this proof, we omit some details similar to those in the proof of Lemma~\ref{l:round_decrease}. Consider an arbitrary round. Let $x$ and $y$ be nodes with load $L_{max}$ and $L_{min}$, respectively, and let P be a shortest path from $y$ to $x$, $P=(y=v_0, v_1, v_2, \dots, v_k=x)$. Note that $k \le D$. Consider the sequence of edges $(v_{i-1}, v_i)$ along $P$, and choose its sub-sequence $S$ consisting of all edges with $\delta_i = load(v_i) - load(v_{i-1}) \ge 1$.  Let $S = (e_1=(v_{i_1 -1}, v_{i_1}), e_2=(v_{i_2 -1}, v_{i_2}), \dots,  e_{k'}=(v_{i_{k'} -1}, v_{i_{k'}}))$, $k' \le k \le D$. Observe that by the definition of $S$, holds: $load(v_{i_1 -1})=L_{min}$, $load(v_{i_{k'}})=L_{max}$, and for any $1 \le j \le k'$, $load(v_{i_{j-1}}) \ge load(v_{i_j - 1})$. As a consequence, $\sum_{j=1}^{k'} \delta_{i_j} \ge L_{max} - L_{min} = K$.
	The sum of load proposals of all $v_{i_j}$ is $\sum_{j=1}^{k'} \lfloor \delta_{i_j}/2 \rfloor \ge  \sum_{j=1}^{k'} (\delta_{i_j}-1)/2 = (\sum_{j=1}^{k'} \delta_{i_j} - D)/2  \ge (K-D)/2$; by the condition $K \ge 2D$, this sum is at least $K/4$. (Note that for any node $v_{i_j}$ with $\delta_{i_j}=1$, it may happen that it made no proposal. Accordingly, we accounted the contribution of each such node $v_{i_j}$ to the above sum by $\lfloor \delta_{i_j}/2 \rfloor = 0$.)
	
	Each node $v_{i_{j}}$ with $\delta_{i_j} \ge 2$ proposes a transfer to its minimally loaded neighbor; denote it by $w_j$. Note that the transfer amount in that proposal is at least $\lfloor \delta_{i_j}/2 \rfloor$. By the algorithm, each node $w_i$ accepts the biggest proposal sent to it, of amount of at least $\lfloor \delta_{i_j}/2 \rfloor$.  
	
	Consider the simple case when all nodes $w_j$ are different. Then by Lemma~\ref{l:fair}, the total potential decrease at the round, $\Delta$, is at least $2 \sum_j \lfloor \delta_{i_j}/2 \rfloor ^2$. By simple algebra, for a set of numbers with a fixed sum and a bounded amount of numbers, the sum of numbers' squares is minimal if the quantity of those numbers is maximal and the numbers are equal. In our case, we obtain $\Delta \ge 2(D  (K/4D)^2) = K^2/8D$, as required.
	
	The reduction of the general case to the simple case is as in the proof of Lemma~\ref{l:round_decrease}.
\end{sloppypar*}
\end{proof}

\begin{lemma}
	\label{l:monotonic_disc}
	As the result of any single round of Algorithm~\ref{algorithm:new_disc}$:$ 
	\begin{enumerate}
	    \item If node $u$ transferred load to node $v$, $load(v)$ does not become greater than $load(u)$ at the end of round.
	    \item No node load strictly decreases to $L_{min}$ or below and no node load strictly increases to $L_{max}$ or above. Therefore, the algorithm is monotonic.
	\end{enumerate}
\end{lemma}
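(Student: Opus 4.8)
The plan is to mirror the proof of Lemma~\ref{l:monotonic} (the continuous analogue), the only novelty being that I must check that the needed \emph{strict} inequalities survive the floor operation in the proposal amount. As a preliminary I would record three facts about a single round of Algorithm~\ref{algorithm:new_disc}: (i) each node participates in at most two transfers, at most one as sender and at most one as receiver, since it issues at most one proposal and accepts at most one; (ii) whenever a node $u$ proposes a transfer of $d = p_{uv} = \lfloor (load(u) - load(v))/2 \rfloor$ to $v$, the proposing condition ``$load(u)$ exceeds $load(v)$ by at least $2$'' forces $d \ge 1$, so every actual transfer moves a strictly positive integer amount; and (iii) every transfer is fair, i.e.\ $load(u) - load(v) \ge 2d$, because $2\lfloor x/2 \rfloor \le x$. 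Facts (ii) and (iii) are the discrete counterparts of the transfer fairness used in Lemma~\ref{l:monotonic}, and (ii) is exactly where the threshold $2$ in the first \textbf{if} earns its keep.

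For the first statement, I would suppose $u$ transfers $d>0$ to $v$. Considering this single transfer, $load(u)$ drops to $load(u)-d$ and $load(v)$ rises to $load(v)+d$, and fairness (iii) gives $load(u)-d \ge load(v)+d$. By (i), the only other changes at $u$ and $v$ during the round are a possible incoming transfer to $u$ and a possible outgoing transfer from $v$, which can only raise the final load of $u$ and lower the final load of $v$. Hence the final $load(v)$ does not exceed the final $load(u)$, as claimed; this is identical to the continuous argument, since it uses fairness alone and not the integrality of $d$.

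For the second statement, I would combine fairness with $d \ge 1$ to sharpen the inequalities to strict ones. For a sender $u$ (which moves $d \ge 1$ to some $v$), its load after the round is at least $load(u)-d \ge load(v)+d > load(v) \ge L_{min}$, so a sender always ends strictly above $L_{min}$; symmetrically, a receiver $v$ ends at most $load(v)+d \le load(u)-d < load(u) \le L_{max}$, hence strictly below $L_{max}$. To conclude monotonicity I would observe that a node whose load strictly decreases must have sent load (it cannot decrease merely by receiving), hence is a sender and stays above $L_{min}$; and a node whose load strictly increases must have received load, hence is a receiver and stays below $L_{max}$. Thus no load descends to $L_{min}$ or below, nor rises to $L_{max}$ or above, so $L_{min}$ never decreases and $L_{max}$ never increases, establishing monotonicity.

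The only place requiring care---and the one I would single out as the main obstacle---is that the floor in $p_{uv}$ might a priori collapse the strict inequalities of the continuous case into equalities (for instance $load(u)-d = load(v)+d$ when $load(u)-load(v)$ is even). Strictness is rescued precisely by $d \ge 1$: the gap of $load(u)-d$ over $load(v)$, namely $load(u)-load(v)-d \ge d \ge 1$, is bounded away from zero by a full unit, which is the discrete phenomenon making both the $L_{min}$ and the $L_{max}$ estimates strict. Note also that the third item of Lemma~\ref{l:monotonic} is deliberately absent here, so no analogue of that argument is needed.
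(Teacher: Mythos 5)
Your proof is correct and takes exactly the route the paper intends: the paper's entire proof of this lemma is the single remark that it is ``similar to that of Lemma~\ref{l:monotonic},'' and your write-up is precisely that adaptation, with the right extra observation that the threshold of $2$ in the first \textbf{if} forces $d=\lfloor (load(u)-load(v))/2\rfloor \ge 1$ while $2d \le load(u)-load(v)$ keeps every transfer fair, so the strict inequalities survive the rounding. Nothing further is needed.
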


The proof is similar to that of Lemma~\ref{l:monotonic}.

\begin{lemma}
	\label{l:total_decrease_disc}
	After at most $(8nD+1) \ln(\lceil nK^2/(2D^2) \rceil)$ rounds, the discrepancy will permanently be less than $2D$.
\end{lemma}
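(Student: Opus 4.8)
The plan is to mirror the argument of Lemma~\ref{l:total_decrease}, which handled the continuous case, but tracking the threshold $2D$ carefully because Lemma~\ref{l:round_decrease_disc} only gives a per-round potential decrease while the discrepancy stays at least $2D$. First I would observe that as long as the current discrepancy $K'$ satisfies $K' \ge 2D$, Lemma~\ref{l:round_decrease_disc} applies and guarantees that the potential drops by at least $(K')^2/8D$ in that round. Since the potential of the graph in any state of discrepancy $K'$ is at most $n(K')^2$ (each node contributes at most $(K')^2$), the relative decrease in each such round is at least a factor of $1 - \frac{(K')^2/8D}{n(K')^2} = 1 - \frac{1}{8nD}$. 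The key point is that this relative bound is \emph{uniform}: it does not depend on the actual value of $K'$, only on the standing assumption $K' \ge 2D$, so it holds in every round before the discrepancy first drops below $2D$.

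Next I would convert this multiplicative decrease into a round count. After $8nD+1$ rounds (each with discrepancy still at least $2D$), the potential shrinks by a factor of at least $\bigl(1-\frac{1}{8nD}\bigr)^{-(8nD+1)} \ge e$, using the standard inequality $(1-1/m)^{-(m+1)} \ge e$. Iterating, after $(8nD+1)\ln(\lceil nK^2/(2D^2)\rceil)$ rounds the potential decreases from its initial value (at most $nK^2$) by a factor of at least $\lceil nK^2/(2D^2)\rceil$, hence becomes at most $2D^2$. I would then argue that a potential of at most $2D^2$ forces the discrepancy below $2D$: if the discrepancy were $K' \ge 2D$, then taking nodes attaining $L_{max}$ and $L_{min}$ gives $p(G) \ge (L_{max}-L_{avg})^2 + (L_{min}-L_{avg})^2 \ge 2(K'/2)^2 = (K')^2/2 \ge 2D^2$, and a strict version of this (since reaching exactly $2D$ still yields potential $\ge 2D^2$) rules out $K' \ge 2D$ once the potential is strictly pushed down, giving discrepancy strictly less than $2D$.

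The subtle point, and the main obstacle, is the logical interplay between ``the discrepancy is still $\ge 2D$'' (needed to apply Lemma~\ref{l:round_decrease_disc}) and ``the potential has dropped below $2D^2$'' (the conclusion). The clean way to handle this is a first-crossing argument: let $t^\ast$ be the first round at which the discrepancy drops below $2D$. For every round before $t^\ast$ the hypothesis $K' \ge 2D$ holds, so the uniform relative decrease applies throughout, and the potential-count computation above shows the potential must fall to at most $2D^2$ within the claimed number of rounds; by the contrapositive of the potential-versus-discrepancy inequality, this means $t^\ast$ has already occurred. Finally, I would invoke monotonicity (Lemma~\ref{l:monotonic_disc}, part~2) to conclude that once the discrepancy drops below $2D$ it can never rise again, so the discrepancy is \emph{permanently} less than $2D$, which is exactly the ``permanently'' in the statement. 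The only routine care needed is the off-by-one bookkeeping in the exponent $8nD+1$ and confirming the inequality $(1-1/(8nD))^{8nD+1} \le 1/e$.
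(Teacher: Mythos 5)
Your proposal is correct and follows essentially the same route as the paper's proof: bound the potential by $nK^2$, use Lemma~\ref{l:round_decrease_disc} to get the uniform relative decrease $1-\tfrac{1}{8nD}$ per round while the discrepancy stays at least $2D$, convert that into the stated round count, and contrast with the lower bound $p(G)\ge 2D^2$ forced by discrepancy at least $2D$, with monotonicity giving permanence. The paper phrases this as a proof by contradiction rather than your first-crossing formulation, but the two are logically identical.
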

\begin{proof}
    Assume to the contrary that the statement of Lemma does not hold.
    By Lemma~\ref{l:monotonic_disc}, Algorithm~\ref{algorithm:new_disc} is monotonic, and hence the discrepancy never increases. Therefore, during the first $(8nD+1) \ln(\lceil nK^2/(2D^2) \rceil)$ rounds of the algorithm, the discrepancy permanently is at least $2D$.
    Note that since the discrepancy is at least $2D$, the potential of $G$ is at least $2 D^2$.
    
	Consider any state of $G$, with discrepancy $K \ge 2D$. The potential of each node is at most $K^2$; hence, $p(G) \le n K^2$. By Lemma~\ref{l:round_decrease_disc}, the relative decrease of $p(G)$ at any round is at least by a factor of $1-\frac{K^2/8D}{n K^2} = 1-\frac 1{8n D}$. Therefore, after $8nD+1$ rounds, the relative decrease of $p(G)$ will be more than by a factor of $e$. Hence after $(8nD+1) \ln(\lceil nK^2/(2D^2) \rceil)$ rounds, beginning from the initial state, the potential will decrease by a factor of more than $\lceil nK^2/(2D^2) \rceil$. Thus, it becomes less than $2D^2$, a contradiction.
\end{proof}

\begin{theorem}
	\label{t:algorithm_disc}
	\begin{sloppypar*}
	Algorithm~\ref{algorithm:new_disc} is monotonic.
	After at most $(24n+3) D \ln(\lceil nK^2/(D^2/2) \rceil) + 6n D^2 = O(nD \log(nK/D) + n D^2)$ time of its execution, the graph state will become 1-Balanced and fixed.
	\end{sloppypar*}
\end{theorem}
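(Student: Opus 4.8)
The plan is to split the run into two phases according to whether the current discrepancy is above or below $2D$, because Lemma~\ref{l:round_decrease_disc} only guarantees a large potential drop in the regime $K \ge 2D$; below $2D$ a separate, cruder argument is needed. Monotonicity itself is immediate from Lemma~\ref{l:monotonic_disc}, so all the real work is in the time bound. Throughout I count one round (the proposal/deal/summary triple) as $3$ time units and repeatedly use $D \ge 1$ to absorb additive constants into the leading factor, e.g. $3 \le 3D$; this is exactly the bookkeeping that turns the round counts of the lemmas into the stated time bounds (compare the continuous case, where $3(2nD+1) \le (6n+3)D$).

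For the first phase I would simply invoke Lemma~\ref{l:total_decrease_disc}: after at most $(8nD+1)\ln(\lceil nK^2/(2D^2)\rceil)$ rounds the discrepancy is permanently below $2D$. Multiplying by $3$ and using $3 \le 3D$ bounds the Phase-1 time by $(24n+3)D\ln(\lceil nK^2/(2D^2)\rceil)$, and since enlarging the argument of $\ln$ only inflates the bound, this is at most $(24n+3)D\ln(\lceil nK^2/(D^2/2)\rceil)$, the leading term of the theorem.

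For the second phase (from discrepancy below $2D$ down to 1-Balanced) the key is a crude per-round bound. Once the discrepancy is below $2D$, every centered load $load(u)-L_{avg}$ is strictly less than $2D$ in absolute value, so $p(G) < 4nD^2$, and by monotonicity this persists for the rest of the run. As observed before Lemma~\ref{l:round_decrease_disc}, as long as the state is not 1-Balanced some node has a neighbor lower by at least $2$ and hence issues a proposal of value $\ge 1$, so at least one deal transferring an integer amount $\ge 1$ is made that round. I would then show such a round decreases $p(G)$ by at least $2$; since $p(G) \ge 0$, this caps the number of Phase-2 rounds by $4nD^2/2 = 2nD^2$, i.e. $6nD^2$ time. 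Finally, once 1-Balanced, no proposal is issued, no transfer occurs, and the state is fixed; adding the two phases gives the claimed total.

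The main obstacle is precisely that per-round ``$\ge 2$'' decrease, since in one round several transfers may share endpoints and the single-transfer Lemma~\ref{l:fair} does not apply verbatim. The clean way around this: because every transfer goes strictly downhill in load, the directed transfer graph has no cycle, and with in- and out-degree at most $1$ it decomposes into vertex-disjoint directed paths. Along one path $u_0 \to \cdots \to u_m$ carrying amounts $t_1,\dots,t_m$ (centered loads $c_i$, with the convention $t_0=t_{m+1}=0$), I would compute the total decrease as $\sum_{i=1}^m 2(c_{i-1}-c_i)t_i - \sum_{i=0}^m (t_i-t_{i+1})^2$. Fairness gives $c_{i-1}-c_i \ge 2t_i$, so the linear part is at least $4\sum_i t_i^2$, while the quadratic part telescopes to $2\sum_i t_i^2 - 2\sum_i t_i t_{i+1} \le 2\sum_i t_i^2$; hence each path contributes at least $2\sum_i t_i^2$. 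Summing over paths, the round decreases $p(G)$ by at least $2\sum_e t_e^2 \ge 2$, which is the required one-round generalization of Lemma~\ref{l:fair} and completes the Phase-2 count.
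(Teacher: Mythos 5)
Your proposal is correct and follows essentially the same two-phase route as the paper: Phase~1 invokes Lemma~\ref{l:total_decrease_disc} to drive the discrepancy below $2D$ within the stated leading term, and Phase~2 bounds the remaining rounds by $2nD^2$ via $p(G) < 4nD^2$ and a per-round potential drop of at least $2$ while the state is not 1-Balanced. The one place you go beyond the paper is the path-decomposition computation showing that a round with several interacting fair transfers still decreases $p(G)$ by at least $2\sum_e t_e^2$; the paper simply asserts the per-round decrease of at least $2$ (implicitly applying the single-transfer Lemma~\ref{l:fair} and the remark that decreases accumulate), so your argument is a legitimate and welcome tightening of that step rather than a divergence.
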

\begin{proof}
Algorithm~\ref{algorithm:new_disc} is monotonic by Lemma~\ref{l:monotonic_disc}. By Lemma~\ref{l:total_decrease_disc}, after  $(24n+3) D \ln(\lceil nK^2/(2D^2) \rceil) = O(nD \log(nK/D))$ time, the discrepancy will become less than $2D$. In such a state, the potential $p(G)$ is less than $n(2D)^2=4n D^2$. At each round before arriving at a 1-Balanced state, at least one transfer is executed, and thus the potential decreases by at least 2. Hence after at most $2n D^2$ additional rounds in not 1-Balanced states, the potential will vanish, which means the fully balanced state. Summarizing, a 1-Balanced state will be reached after the time as in the statement of Theorem. By the description of the algorithm, it will not change after that.
\end{proof}

\emph{Remark\/}:
We believe that the running time bounds of deal-agreement based distributed algorithms for load balancing could be improved by future research. This is since up to now, we used only a restricted set of tools: the current bounds of Algorithm~\ref{algorithm:new_cont} and the first phase of Algorithm~\ref{algorithm:new_disc} are based on an analysis of a single path in the graph at each iteration, while the bound for the second phase of Algorithm~\ref{algorithm:new_disc} is based on an analysis of a single load transfer at each iteration.


\section{Multi-Neighbor Load Balancing Algorithm} 
\label{s:diffusion}

In this section, we present another monotonic synchronous load balancing algorithm, which is based on \emph{distributed proposals}. There, each node may propose load transfers to several of its neighbors, aiming to equalize the loads in its neighborhood as much as possible. \remove{\textcolor{purple}{Intuitively, this should speed up the convergence, as compared with single proposal algorithms.}} This variation, exchanging loads in parallel, is expected to speed up the convergence, as compared with single proposal algorithms. We formalize this as follows.
 Consider node $p$ and the part $\mathcal{V}_{less}(p)$ of its neighbors with loads smaller than $load(p)$. The plan of $p$ is to propose to nodes in $\mathcal{V}_{less}(p)$ in such a way that if all its proposals would be accepted, then the resulting minimal load in the node set $\mathcal{V}_{less}(p) \cup \{p\}$ will be maximal. 
 Note that for this, the resulting loads of $p$ and of all nodes that it transfers loads to should become equal. (Compare with the scenario, where we pour water into a basin with unequal heights at its bottom: the water surface will be flat.) Such proposals can be planned as follows.
 
\begin{algorithm}[!b]
\caption{Distributed Proposal: Multi-Neighbor Load Balancing Algorithm}
\label{algo:a1}
\KwIn{An undirected graph $G = (V,E, load)$}
\KwOut{Graph $G$ in a $1$-Balanced state}


Each node $p$ repeatedly executes line 3 to 19 and 20 to 29 concurrently\\

\SetKwFunction{FMain}{}
         \SetKwProg{Fn}{Upon a PULSE execute}{}{}
         \DontPrintSemicolon
            \Fn{\FMain}{
                Node $p$ reads the load of its neighbors\\
                Compute $\mathcal{V}_{less}$ = $\mathcal{V}_{less}(p)$, $\mathcal{V}_{more}$ = $\mathcal{V}_{more}(p)$ \\
                
                \If{$q = |\mathcal{V}_{less}| > 0$}{ 
                    $TentativeLoad = load(p)$ \\
                    Sort and number $\mathcal{V}_{less}$ in a non-decreasing fashion as $p_1, p_2, \dots p_q$ \\
                    \For {$i=1$ to $q$} {$prop(i)=0$}
                    $i = 1$ \\
                    
                     \While{$TentativeLoad \ge load(p_i) + prop(i) + 2$}{
                        $TentativeLoad = TentativeLoad - 1$ \\
                        $prop(i) = prop(i) + 1$ \\
                        \If{$i<q \land load(p_i)+prop(i) > load(p_{i+1}) + prop(i+1)$} 
                            {$i = i + 1$} 
                        \Else {$i = 1$} 
                        }
                        
                    \For {$i=1$ to $q$} 
                         {Send a proposal ($prop(i), TentativeLoad$) to $p_i$}
                        
                }
                 \SetKwFunction{FMain}{}
                 \SetKwProg{Fn}{Upon arrival of proposal}{(ProposeToTransfer, TentativeLoad)}{}
                 \DontPrintSemicolon
                 \Fn{\FMain}{ 
                        $MaxLoad$ = Load of Maximum loaded node from $\mathcal{V}_{more}$\\
            
                        \textit{LoadToReceive} = $MaxLoad - load(p) - 1$\\
                      
                        Sort $\mathcal{V}_{more}$ in non-increasing fashion\\
                        
                        \While{$|\mathcal{V}_{more}| > 0 \land$ \textit{LoadToReceive} $> 0$}{
                        \textit{ProposeToReceive} = $\lfloor$ \textit{LoadToReceive}$/|\mathcal{V}_{more}| \rfloor$\\
                         
                         \textbf{Final Deal:} Receive the additional load \textit{ProposeToReceive} from each node of $\mathcal{V}_{more}$ in Round-Robin fashion:\\

                         \nonl \If{Load of node from $\mathcal{V}_{more}$ == \textit{TentativeLoad}}{
                         \nonl Remove node from $\mathcal{V}_{more}$ 
                         }
                       
                        $Deal$ = \textit{min(ProposeToTransfer, ProposeToReceive)}\\
                        $load(p) = load(p) + Deal$\\
                        $load(q) = load(q) - Deal$
                                     
                        } 
                 }
            }
 
\end{algorithm} 
 Let us sort the nodes in $\mathcal{V}_{less}(p)$ from the minimal to the maximal load. Consider an arbitrary prefix of that order, ending by node $r$; denote by $avg_r$ the average of loads of $p$ and the nodes in that prefix.
 Let the maximal prefix such that $avg_r > load(r)$ be with $r=r^*$; we denote the node set in that prefix by $\mathcal{V}'_{less}(p)$. Node $p$ proposes to each node $q \in \mathcal{V}'_{less}(p)$ a transfer of value $avg_{r^*} - load(q) > 0$. If all those proposals would be accepted, then the resulting loads of $p$ and of all nodes in $\mathcal{V}'_{less}(p)$ will equal $avg_{r^*}$.
 In the discrete problem setting, some of those proposals should be $\lfloor avg_{r^*} - load(q) \rfloor$ and the others be $\lceil avg_{r^*} - load(q) \rceil$, in order to make all proposal values integers; then, the planned resulting loads at the nodes in $\mathcal{V}_{less}(p) \cup \{p\}$ will be equal up to 1.

Let us show advantages of the distributed proposal approach on two generic examples. 
 Consider the following ``lucky'' example, where the running time of a distributed proposal algorithm is $O(1)$. Let the graph consist of $n$ nodes: $p$ and $n-1$ its neighbors, with an \emph{arbitrary set of edges} between the neighbors. Let $load(p)=n^2$ and the loads of its neighbors be \emph{arbitrary integers} between 0 and $n$. Node $p$ will propose to all its neighbors. If the strategy of proposal acceptance would be by preferring the maximal transfer suggested to it, then all proposals of $p$ will be accepted, and thus the resulting loads in the entire graph \emph{will equalize after a single round} of proposal/acceptance.

 As another example motivating distributed proposals, let us consider the on-line setting where the discrepancy is maintained to be usually small, but large new portions of load can arrive at some nodes at unpredictable moments. Let load $L >> L_{max}$ arrive at node $p$ with $r$ neighbors. The next round will equalize the loads at $p$ and all its neighbors. Then, the maximal load will become at most $L_{max}+(L-L_{max})/((r+1)$, that is the initial discrepancy jump by at least $L-L_{max}$ will be provably decreased by a factor of $r+1$ in time $O(1)$.

Algorithm~\ref{algo:a1} implements the described distributed proposal approach for the discrete problem setting. The proposal acceptance strategy in it is also based on a similar neighborhood-equalizing idea. 
The set of neighbors of node $p$ with loads greater than $load(p)$ is denoted there by $\mathcal{V}_{more}(p)$.
We believe that the distributed proposal algorithms as a generalization of single proposal ones benefit from the same complexities. 
Though, 
we present a simple analysis which assists us in the (proof-based) design of the algorithms.

\begin{lemma}
\label{lemma:lemma1}
As the result of any round of Algorithm \ref{algo:a1}, the maximum individual load does not increase and the minimum individual load does not decrease.
\end{lemma}
\begin{proof}
 Consider an arbitrary round, where $L_{max}$ holds the maximum individual load and $L_{min}$ holds the minimum individual load. When $|\mathcal{V}_{less}| \neq 0$ then node $p$ proposes a proposals to each node of $\mathcal{V}_{less}$  (Line 19) but if the load of any node from $\mathcal{V}_{less}$ exceeds \textit{TentativeLoad} then remove that node from $\mathcal{V}_{less}$. Similarly, when the nodes receive the proposal then node $p$ receives the deal from each node in $\mathcal{V}_{more}$  (Line 26) but if the load of any node from $\mathcal{V}_{more}$ reaching less than \textit{TentativeLoad} then remove that node from $\mathcal{V}_{more}$. Thus, the maximum individual load does not increase and the minimum individual load does not decrease.
\end{proof}

\noindent
Define the set of nodes with the minimum individual load:
\vspace{-0.1cm}
\begin{flushright}
$\mathcal{SL}_{min} = \{p \in \mathcal{N} | load(p)  = {L}_{min}\}$, where ${L}_{min} = \min\limits_{p \in \mathcal{N}}load(p)$\\
\end{flushright}

\noindent
Similarly, define the set of nodes with the maximum individual load:
\vspace{-0.1cm}
\begin{flushright}
$\mathcal{SL}_{max} = \{p \in \mathcal{N} | load(p)  = {L}_{max}\}$, where ${L}_{max} = \max\limits_{p \in \mathcal{N}}load(p)$
\end{flushright}

\begin{lemma}
\label{lemma:lemma2}
Algorithm \ref{algo:a1} guarantees that no node joins the sets  $\mathcal{SL}_{min}$ and $\mathcal{SL}_{max}$.
\end{lemma}
\begin{proof}{}
First we show that no node joins $\mathcal{SL}_{min}$. A node $p$ that does not belong to $\mathcal{SL}_{min}$ must give loads in order to join $\mathcal{SL}_{min}$. However, before proposing the load to send in each round node $p$ compares the \textit{TentativeLoad} with the load of each node in $\mathcal{V}_{less}$, So, Algorithm \ref{algo:a1} ensures that node $p$ never proposes to give load amount that makes his load less than $L_{min}$.

Analogously, we show that no node joins $\mathcal{SL}_{max}$.
A node $p$ that does not belong to $\mathcal{SL}_{max}$ must receive enough loads in order to join $\mathcal{SL}_{max}$. However, before receiving the load in each round node $p$ compares the \textit{TentativeLoad} with the load of each node in $\mathcal{V}_{more}$, So, Algorithm \ref{algo:a1} ensures that node $p$ never receives load amount that makes its load more than $L_{max}$. Hence, no node joins the sets  $\mathcal{SL}_{min}$ and $\mathcal{SL}_{max}$.
\end{proof}

\begin{corollary}
In any round, as long as the difference between any neighboring pair is not 1 or 0, then repeatedly, the size of $\mathcal{SL}_{min}$ and/or $\mathcal{SL}_{max}$ monotonically shrinks until the gap between $L_{max}$ and $L_{min}$ is reduced. Therefore the system convergences toward being \textit{1-Balanced} while the difference between any neighboring pair is greater than 1.
\end{corollary}
\begin{proof}
Lemma \ref{lemma:lemma2} establishes no node joins $\mathcal{SL}_{min}$ and $\mathcal{SL}_{max}$. Algorithm \ref{algo:a1} executes repeatedly until the deals happen, and load transfers from the higher loaded node to a lesser loaded node. So, a member of one of $\mathcal{SL}_{min}$ and $\mathcal{SL}_{max}$ sets leaves. Lemma \ref{lemma:lemma1} ensures that no lesser (higher) load value than the values in $\mathcal{SL}_{min}$ ($\mathcal{SL}_{max}$, respectively) is introduced. Once one of these sets becomes empty, a new set is defined instead of the empty set, implying a gap shrink between the values in the two sets.
\end{proof}

\begin{lemma}
\label{lemma:lemma3}
Algorithm \ref{algo:a1} guarantees  potential function converges after each load transfer. 
\end{lemma}

\begin{proof}
Consider an arbitrary round, where $load(p)$ represents the load node of $p$ and $L_{avg}$ computes the average load in the whole graph. $L_{avg}$ is the same in any round.
We consider a potential function for analyzing convergence of algorithm: $\sum (load(p) - L_{avg})^ 2$. 
Assumes node $i$ transferring 1 unit load to node $j$, where $L_{i} > L_{j}$ . Here we analyze that the potential function is decreasing after transferring the load from the higher load to the lower load after each deal. Potential function value before transferring the load:
$(load(i) - L_{avg})^ 2 + (load(j) - L_{avg} )^ 2$ . Potential function value after transferring the load: $(load(i) -1 - L_{avg})^ 2 + (load(j) + 1- L_{avg})^ 2$ . Potential function difference should shrink namely:
$ ((load(i) - L_{avg})^ 2 + (load(j) - L_{avg} )^ 2) - ((load(i) -1 - L_{avg})^ 2 + (load(j) +1- L_{avg})^ 2)  > 0$ 

After expansion:  $ load(i) > load(j) + 1$, Which follows our required $ load(i) > load(j) + 1$ condition for the algorithm to finalize a deal. Since $L_{avg}$ is fixed. According to Algorithm \ref{algo:a1}, condition ensures legitimate load transfer from the higher load to the lower load. As node $p$ receives the load same as $L_{avg}$ but condition $|\mathcal{V}_{less}| \neq 0$ satisfies, then node $p$ starts transferring the load to the less loaded node by which potential function converges.
\end{proof}

\begin{theorem}
Algorithm \ref{algo:a1} is monotonic. After $O(nK^2)$ time the initial discrepancy of $K$ will permanently be 1-Balanced, where $n$ is the total number of nodes in the graph.
\end{theorem}

\begin{proof}
Lemma \ref{lemma:lemma1} and \ref{lemma:lemma2} establish the monotonicity of Algorithm \ref{algo:a1}. Since at least one deal is executed in a constant number of message exchanges (read loads, proposals, deals) the algorithm takes time proportional to the total number of deals. Deals are executed until all nodes are \emph{1-Balanced}. Thus, if at least one deal is happening in $O(1)$ time then the algorithm will converge in $O(nK^2)$ time.
\end{proof}

\section{Asynchronous Load Balancing Algorithms}
\label{s:async}

\remove{
\textcolor{red}{(R3: The fourth algorithm is unnecessarily complicated, and the complicatedness does not contribute to the performance of the algorithm. For example, each time a node finds two or more less-loaded neighbors, it computes a subset of those neighbors and sends a part of its loads to them. However, it suffices to send a part of its load only to the minimum loaded neighbor. (Ties can be broken arbitrarily.) We can still get the same upper bound on the time complexity as Theorem 4, i.e., $O(nK^2)$ time.)}
}

In this section, we describe our main techniques for achieving asynchronous load balancing. We consider the undirected connected communication graph $G = (V, E, load)$, in which nodes $v_i \in V$ hold an arbitrary non-negative $load(v)$. Nodes are communicating using message passing along communication graph edges. Neighboring nodes in the graph can communicate by sending and receiving messages in FIFO order. 
In asynchronous systems, a message sent from a node to its neighbor eventually arrives at its destination. However, unlike in synchronous (and semi-synchronous) systems, there is no time-bound on the time it takes for a message to arrive at  its destination. Note that one may suggest using a synchronizer to apply a synchronous algorithm. Such an approach will slow down the load balancing activity to reflect the slowest participant in the system. 

We consider that standard settings of asynchronous systems, see e.g.,~\cite{10.5555/335041,10.5555/525656}. A configuration of asynchronous systems is described by a vector of states of the nodes and message (FIFO) queues, one queue for each edge. The message queue consists of all the messages sent, over the edge, and not yet received. System configuration is changed by an atomic step in which a message is sent or received (local computations are assumed to take negligible time). An atomic step in which a message is sent in line 26 of Algorithm \ref{algo:a4}, is called the atomic \textit{deal} step, or simply a deal. 

Our asynchronous load balancing algorithm is based on \emph{distributed proposals}. There, each node may propose load transfers to several of its neighbor by computing $\mathcal{PV}_{less}(p)$, which is part of $\mathcal{V}_{less}(p)$. $\mathcal{PV}_{less}(p)$ is the resulting minimal loaded node set whose load is less than \emph{TentativeLoad} after all proposals gets accepted. While sending the proposal, each node sends the value of \emph{LoadToTransfer} (load which can be transferred to neighboring node) and \emph{TentativeLoad} (load of the node after giving loads to its neighbors) with all set of nodes in $\mathcal{PV}_{less}(p)$. After receiving the proposal, the node sends an acknowledgment  to the sender node; the sender node waits for an acknowledgment  from all nodes of $\mathcal{PV}_{less}(p)$.

The asynchronous algorithm ensures that the local computation between two nodes is assumed to be before the second communication starts. Consider an example when a node $q$ of
$\mathcal{PV}_{less}$ receives a proposal, the deal happens between node $p$ and
node $q$. In this case \textit{TentativeLoad} of node $p$ is always greater than the load of node $q$ (when $q$ responds to the deal) because node $p$ is waiting for acknowledgments from all nodes of $\mathcal{PV}_{less}$. 

\begin{algorithm}[!b]
\caption{Asynchronous Load Balancing Algorithm }
\label{algo:a4}
\KwIn{An undirected graph $G = (V,E, load)$}
\KwOut{Graph $G$ in a 1-Balanced state}

$LastReceivedLoad = 0$\\
$LastGaveLoad = 0$\\

Each node $p$ repeatedly executes line 5 to 22, 23 to 30, and 31 to 33 concurrently  \\
         \SetKwFunction{FMain}{}
         \SetKwProg{Fn}{Execute forever}{  do}{}
         \DontPrintSemicolon
          \Fn{}{

                    $load(p) = load(p) +$ \textit{LastReceivedLoad} -\textit{ LastGaveLoad}\\
                    \textit{LastReceivedLoad}= 0\\
                    \textit{LastGaveLoad} = 0\\
                    $TLoad(p) = load(p)$\\
                    Node $p$ reads the load of its neighbors\\
                    Compute $\mathcal{V}_{less}$ = $\mathcal{V}_{less}(p)$ \\
                   
                      \If{$ |\mathcal{V}_{less}| \neq 0$}{    
                              $MinLoad$ = Load of minimum loaded node from $\mathcal{V}_{less}$\\
                              \textit{LoadToTransfer} = $ \lfloor(TLoad(p) - MinLoad)/2 \rfloor $\\
                              \textit{TentativeLoad} = $ TLoad(p)$ - \textit{LoadToTransfer} \\
                            
                             \SetKwFunction{FMain}{}
                             \SetKwProg{Fn}{for every $q \in \mathcal{V}_{less}$}{  do}{}
                             \DontPrintSemicolon
                             \Fn{}{
                                 \If{$load(q) <$ \textit{ TentativeLoad}}{
                                    $\mathcal{PV}_{less} = \mathcal{PV}_{less} \cup \{ q \}$ 
                                 }
                             }
                             $P_p$ = \textit{  RRProposal(LoadToTransfer, }$\mathcal{PV}_{less},$ \textit{TentativeLoad)} \\

                             \SetKwFunction{FMain}{}
                             \SetKwProg{Fn}{For every $q \in \mathcal{PV}_{less}$}{  do}{}
                             \DontPrintSemicolon
                             \Fn{}{
                             
                             Send ($P_p[q]$, \textit{TentativeLoad}) \\
                             $Ack_q = False$ \\
                             }

                             Wait until $Ack_q = True$ for every $q$ in $\mathcal{PV}_{less}$,
                            
                        } 
 }

         \SetKwFunction{FMain}{}
         \SetKwProg{Fn}{upon arrival of}{ from neighbor $q$  do}{}
         \DontPrintSemicolon
            \Fn{\FMain{$Proposal_q, TentativeLoad_q$}}{
            \If{$TentativeLoad_q - TLoad(p) > 0$
            }{
            $ Deal = min( (TentativeLoad_q - TLoad(p)), Proposal_q) $\\
            send to $q$ $AckMsg = Deal$\\
            
            \textit{LastReceivedLoad} = \textit{LastReceivedLoad} + \textit{Deal} \\
            $TLaod(p)$ = $TLoad(p)$ + \textit{Deal}
            }
            \Else{
            send to $q$ $AckMsg = 0$
            }
            }
      
        \SetKwFunction{FMain}{}
                    \SetKwProg{Fn}{upon}{ do}{}
                    \DontPrintSemicolon
                        \Fn{\FMain{$AckMsg \ reception \ from \ q $}}{
                        \textit{LastGaveLoad} =  \textit{LastGaveLoad} + \textit{AckMsg.Deal}\\
                         $Ack_q = True$
                         }  
                  
         \end{algorithm}
\setlength{\interspacetitleruled}{0pt}%
\setlength{\algotitleheightrule}{0pt}%
\begin{algorithm}[t]
\LinesNumbered
\setcounter{AlgoLine}{33}

        \SetKwFunction{FMain}{}
         \SetKwProg{Fn}{Procedure RRProposal}{}{}
         \DontPrintSemicolon
             \Fn{\FMain{\textit{LoadToTransfer}, $\mathcal{PV}_{less}$, \textit{TentativeLoad}}}{
             $\mathcal{TV}_{less} = \mathcal{PV}_{less}$ \\
             \textit{LeftLoadToTransfer} = \textit{LoadToTransfer}
             
             \While{$|\mathcal{TV}_{less}| > 0 \land$ \textit{LeftLoadToTransfer} $> 0$}{

                $m$ = $max (\mathcal{TV}_{less})$\\
                    \If{$ (TentativeLoad - m)\times |\mathcal{TV}_{less}|  \leq$ \textit{LeftLoadToTransfer}}{
                    
                    Update $P_p$ to propose to transfer additional load (\textit{TentativeLoad}-\textit{m}) to every node in $\mathcal{TV}_{less}$, and subtract from \textit{LeftLoadToTransfer}: \\
                   \nonl \If{Load of node from $\mathcal{TV}_{less}$ == \textit{TentativeLoad}}{
                    \nonl Remove node from $\mathcal{TV}_{less}$
                     }
                    
                    }

                \Else{
                    
                    Update $P_p$ to propose to transfer additional loads \textit{LeftLoadToTransfer} to node of $\mathcal{TV}_{less}$ in Round-Robin fashion, and subtract from \textit{LeftLoadToTransfer}
                }
             }
              \KwRet $P_p$   
              }

                  \end{algorithm}

Execution of Algorithm~\ref{algo:a4} is as follows: Every time each node makes copy of $load(p)$ in $TLoad(p)$, reads the load of its neighbors and computes $\mathcal{V}_{less}(p)$ and makes a copy in $\mathcal{V}_{less}$ . If $ |\mathcal{V}_{less}| \neq 0$ then each node computes \textit{MinLoad}, which stores the load of the minimum loaded node from $\mathcal{V}_{less}$. Computes \textit{LoadToTransfer} by computing $ \lfloor(TLoad(p) - MinLoad)/2 \rfloor $. Also computes \textit{TentativeLoad} by computing $TLoad(p) - LoadToTransfer$. For each node of $\mathcal{V}_{less}$ whose load is less than \textit{TentativeLoad} will be added into $\mathcal{PV}_{less}$.  After deciding the nodes in $\mathcal{PV}_{less}$ node $p$ sends proposal to each node of $\mathcal{PV}_{less}$ with proposal and \textit{TentativeLoad}.

Upon arrival of proposal each node individually checks $TentativeLoad_q - TLoad(p) > 0$ If satisfied, computes the \textit{Deal} by computing $ min( (TentativeLoad_q - TLoad(p)), Proposal_q)$. Sends the acknowledgement message as \textit{Deal} to each neighbor, updates the \textit{LastReceivedLoad} and $TLoad(p)$ by adding \textit{Deal} into them. Otherwise send \textit{0} as acknowledgement message. Node $p$ waits for acknowledgement from each node of $\mathcal{PV}_{less}$ and once it has received \textit{AckMsg} with the deal from its neighbors, updates the \textit{LastGaveLoad} and then node $p$ sets own acknowledgement \textit{True}.

The \textit{Round-Robin Proposal} starts by making a copy of  $\mathcal{PV}_{less}$ in  $\mathcal{TV}_{less}$ and \textit{LoadToTransfer} in \textit{LeftLoadToTransfer}. It  Keep updating proposal until $|\mathcal{TV}_{less}| > 0 \land LeftLoadToTransfer > 0$. If this condition satisfies, it store the maximum load of maximum loaded node of $\mathcal{TV}_{less}$ in \textit{m}. For each node of $\mathcal{TV}_{less}$ if $ (TentativeLoad - m)\times |\mathcal{TV}_{less}|  \leq LeftLoadToTransfer$ condition satisfied then it update $P_p$ to propose to transfer additional load $(TentativeLoad-m)$ to every node in $\mathcal{TV}_{less}$, and subtract from $LeftLoadToTransfer$. Any node from $\mathcal{TV}_{less}$ that has already received load equal to \textit{TentativeLoad} then remove that node from the $\mathcal{TV}_{less}$. If previous condition does not satisfy, then Update $P_p$ to propose to transfer additional loads \textit{LeftLoadToTransfer} to node of $\mathcal{TV}_{less}$ in the Round-Robin fashion, and subtract from $LeftLoadToTransfer$ and return the proposal $P_p$. Hence, node $p$ updates its own load by adding \textit{LastreceivedLoad} and subtracting \textit{LastGaveLoad}. As a result a deal completion happens.

During the execution of Algorithm \ref{algo:a4} each node repeatedly executes lines 5 to 22 (send proposal), 23 to 30 (upon arrival of the proposal), and 31 to 33 (upon acknowledgment reception) concurrently and forever.

\begin{lemma}
\label{l:async1}
In every deal the load transfer is from the higher loaded node to the lower loaded node.
\end{lemma}{}
\begin{proof}
We analyzed this using the interleaving model, in this model at the given
time only a single processor executes an atomic step. Each atomic step
consists of internal computation and single communication operation
(Send-Receive message). The atomic step may consist of local computations
(e.g., computation of load transfer between two nodes). The asynchronous algorithm
ensures that the local computation between two nodes is assumed to be before
the second communication starts. Consider an example when a node $q$ of
$\mathcal{PV}_{less}$ receives a proposal, the deal happens between node $p$ and
node $q$. In this case \textit{TentativeLoad} of node $p$ is always greater than the load of node $q$ because node $p$ is waiting for acknowledgments from all nodes of $\mathcal{PV}_{less}$.
\end{proof}{}

\begin{lemma}
\label{l:async2}
As a result of any round of Algorithm \ref{algo:a4}, the maximum individual load does not increase and the  minimum individual load does not decrease.
\end{lemma}
\begin{proof} Consider an arbitrary deal, in each deal node $p$ checks with the neighboring nodes ($\mathcal{V}_{less}$), those nodes whose load is less than \emph{TentativeLoad} will become part of $\mathcal{PV}_{less}$ and receive the proposal from node $p$. Upon the proposal arrival the node computes the \textit{Deal} and picks the minimum load among ($TentativeLoad_q - TLoad(p)$) and receives the proposal. This ensures that no node receives the additional load by which they exceed the maximum individual load, similarly, no node gives more loads, by which they retain less than the minimum individual load. 
\end{proof}

\begin{theorem} 
Algorithm \ref{algo:a4} is monotonic. After $O(nK^2)$ time the initial discrepancy of $K$ will permanently be 1-Balanced, where $n$ is the total number of nodes in the graph.
\end{theorem}
\begin{proof}
 Lemma \ref{l:async1} and \ref{l:async2} establish the monotonicity of algorithm \ref{algo:a4}.  
We now show that at least one deal is executed during a constant
number of messages exchanges. Assume towards contradiction that no
deal is executed, and hence, the loads are constant. The first reads in
these fixed loads execution must result in a correct value of loads,
and therefore followed by correct proposals and deals, hence, the
contradiction. Since at least one deal is executed in a constant number of message exchanges (read loads, proposals, deals) the algorithm takes time proportional to the total number of deals. Deals are executed until all nodes are \emph{1-Balanced}.  Thus, when at least one deal is happening in $O(1)$ time then the algorithm will converge in $O(nK^2)$ time.
 \end{proof}

\remove{A node repeatedly sends inquiring messages for loads of its neighbors, in short it reads the loads of its neighbors. }

\section{Self-Stabilizing Load Balancing Algorithms}
\label{s:selfstab}
\remove{
\textcolor{red}{(R1: Section 5 and App D, self-stabilizing algorithm:
It is not clear what is the correctness measure of the self-stabilizing
setting. It is possible "load(p)", "LastReceivedLoad" and
"LastGaveLoad" are corrupted and message channels may contain garbage
messages initially.
Do you mean that the final state (configuration) must be 1-Balancing
regardless the initial load(p) for each p?
For example, what will happen if "LastGaveLoad" has huge value
by corruption?
The discussion in this paper (App C), the authors simply says that
garbage messages are consumed and data link protocol eventually
recovers.
I do not think this discussion is enough.)}

\textcolor{red}{(R3:  I do not understand the correctness of the fifth algorithm. The claim of Theorem 5 itself is ambiguous. What do you mean by "the unknowns contents (duplicate and omitted) of the link are controlled"? What is "undesired messages"? Algorithm 5 is also unclear. This algorithm assumes the k-bounded channel, but k does not appear in the pseudocode.)}

\textcolor{red}{(R4: This paper fit in the topics of the conference but the self-stabilizing version of the algorithm is not very interesting. Indeed, this algorithm is obtained by the application of an existing self-stabilizing data-link protocol to the non stabilizing version of the algorithm and does not bring any new insight in the area of self-stabilization.)}}

In this section we present the first self-stabilizing load balancing algorithm. Our self-stabilizing load balancing algorithm  is designed for an asynchronous message passing system. The system setting for self-stabilizing load balancing algorithms is the same as the system settings in the asynchronous load balancing algorithm. The self-stabilization requirement to reach a suffix of the execution in a set of legal executions starting in an arbitrary configuration. Where execution is an alternating sequence of configurations and atomic steps, such that the atomic step next to a configuration is executed by a node according to its state in the configuration and the attached message queue \cite{10.5555/335041}. In the asynchronous system, deadlock can occur if the sender continuously waits for non-existing acknowledgments. Our Self-Stabilizing Load Balancing Algorithm uses the concept of Self-Stabilizing Data-Link with \emph{k-bounded} channel \cite{10.1007/boundedFIFO}, which is responsible for the eventual sending and receiving of loads, where in every deal load transfers from higher loaded node to lesser loaded node. 

\setlength{\interspacetitleruled}{1.5pt}%
\setlength{\algotitleheightrule}{0.8pt}%
\begin{algorithm}[!b]
\caption{Self-Stabilizing Load Balancing Algorithm}
\label{algo:ssLoad}
\KwIn{An undirected graph $G = (V,E, load)$}
\KwOut{Graph G in 1-Balanced state.}

$LastReceivedLoad = 0$\\
$LastGaveLoad = 0$\\

Each node repeatedly executes line 5 to 23, 24 to 31, and 32 to 34 concurrently  \\
         \SetKwFunction{FMain}{}
         \SetKwProg{Fn}{Repeat forever}{  do}{}
         \DontPrintSemicolon
          \Fn{}{
                   
                    $load(p) = load(p) + LastReceivedLoad - LastGaveLoad$\\
                    $LastReceivedLoad = 0$\\
                    $LastGaveLoad = 0$\\
                    $TLoad(p) = load(p)$\\

                     \SetKwFunction{FMain}{}
                     \SetKwProg{Fn}{ \textcolor{blue}{Repeat forever do}}{  }{}
                     \DontPrintSemicolon
                     \Fn{}{
                        Node $p$ reads the load of its neighbors\\
                         Compute $\mathcal{V}_{less}$ = $\mathcal{V}_{less}(p)$ \\
                     }

                      \If{$ |\mathcal{V}_{less}| \neq 0$}{    
                             \textit{MinLoad} = Load of minimum loaded node from $\mathcal{V}_{less}$\\
                             \textit{LoadToTransfer} = $ \lfloor(TLoad(p) - MinLoad)/2 \rfloor $\\
                              \textit{TentativeLoad} = $ TLoad(p) - LoadToTransfer $\\
                            
                             \SetKwFunction{FMain}{}
                             \SetKwProg{Fn}{for every $q \in \mathcal{V}_{less}$}{  do}{}
                             \DontPrintSemicolon
                             \Fn{}{
                                 \If{$load(q) < TentativeLoad$}{
                                    $\mathcal{PV}_{less} = \mathcal{PV}_{less} \cup load(q)$ 
                                 }
                             }
                             $P_p = RRProposal(LoadToTransfer,$ $\mathcal{PV}_{less}, TentativeLoad)$ \\

                             \SetKwFunction{FMain}{}
                             \SetKwProg{Fn}{Repeat forever} {  do}{}
                             \DontPrintSemicolon
                             \Fn{}{
                             
                             DataLinkSend $(P_p[q], TentativeLoad)$ \\
                             $Ack_q = False$ \\
                             
                             }

                            $Ack_q = True$ for every $q$ in $\mathcal{PV}_{less}$
                            
                            
                        } 
 }    
            
    \end{algorithm}
\setlength{\interspacetitleruled}{0pt}%
\setlength{\algotitleheightrule}{0pt}%
\begin{algorithm}[!t]
\LinesNumbered
\setcounter{AlgoLine}{23}       

         \SetKwFunction{FMain}{}
         \SetKwProg{Fn}{\textcolor{blue}{upon DataLinkArrival of }{\textcolor{blue}{  from neighbor $q$  do}}}{}
         \DontPrintSemicolon
            \Fn{\FMain{\textcolor{blue}{$Proposal_q, TentativeLoad_q$}}}{
            \If{$TentativeLoad_q - TLoad(p) > 0$
            }{
            $ Deal = min( (TentativeLoad_q - TLoad(p)), Proposal_q) $\\
             DataLinkSend to $q$ $AckMsg = Deal$\\
            
            $LastReceivedLoad = LastReceivedLoad + Deal$ \\
            $TLaod(p) = TLoad(p) + Deal$
            }
            \Else{
             \textcolor{blue}{DataLinkSend to $q$ $AckMsg = 0$}
            }
            }

        \SetKwFunction{FMain}{}
                    \SetKwProg{Fn}{\textcolor{blue}{upon}}{ \textcolor{blue}{do}}{}
                    \DontPrintSemicolon
                        \Fn{\FMain{\textcolor{blue}{$AckMsg \ DataLinkReception \ from \ q $}}}{
                        $LastGaveLoad =  LastGaveLoad + AckMsg.Deal$\\
                         $Ack_q = True$
                         }  

        \SetKwFunction{FMain}{}
         \SetKwProg{Fn}{Procedure RRProposal}{}{}
         \DontPrintSemicolon
             \Fn{\FMain{$LoadToTransfer$, $\mathcal{PV}_{less}$, $TentativeLoad$}}{
             $\mathcal{TV}_{less} = \mathcal{PV}_{less}$ \\
             $LeftLoadToTransfer$ = $LoadToTransfer$
             
             \While{$|\mathcal{TV}_{less}| > 0 \land LeftLoadToTransfer > 0$}{

                $m$ = $max (\mathcal{TV}_{less})$\\
                    \If{$ (TentativeLoad - m)\times |\mathcal{TV}_{less}|  \leq LeftLoadToTransfer$}{
                    
                    Update $P_p$ to propose to transfer additional load $(TentativeLoad-m)$ to every node in $\mathcal{TV}_{less}$, and subtract from $LeftLoadToTransfer$: \\
                   \nonl \If{Load of node from $\mathcal{TV}_{less}$ == $TentativeLoad$}{
                    \nonl Remove node from $\mathcal{TV}_{less}$
                     }
                    
                    }

                \Else{
                    
                    Update $P_p$ to propose to transfer additional loads \textit{LeftLoadToTransfer} to node of $\mathcal{TV}_{less}$ in Round-Robin fashion, and subtract from $LeftLoadToTransfer$
                }
             }
              \KwRet $P_p$   
              }

\end{algorithm}

 \remove{
\setlength{\interspacetitleruled}{1.5pt}%
\setlength{\algotitleheightrule}{0.8pt}%
\begin{algorithm}[H]
\label{algo:receiver}
\caption{Pseudo code for delivering message}
\SetKwFunction{FMain}{}
         \SetKwProg{Fn}{upon message arrival}{  do}{}
         \DontPrintSemicolon
          \Fn{}{
          Send ACK to sender
          }
          
         \SetKwProg{Fn}{upon \textit{last bit} received}{  do}{}
         \DontPrintSemicolon
          \Fn{}{
          \If{(last bit received == 0 $\land$ current bit received == 1)}{
          Deliver the message\\
          last bit received  = current bit received\\
          }
          }
\end{algorithm} 
}

 Here we use the concept of Self-Stabilizing Data-Link with \emph{k-bounded} channel, which is responsible for the eventual sending and receiving of loads. Starting in an arbitrary configuration with arbitrary messages in transit, a reliable data link eliminates the possibility of corrupted messages in the transient, and ensures that the actual load values are communicated among the neighbors. The retransmission of messages helps to avoid deadlocks, ensuring the arrival of an answer when waiting for an answer. In order to deliver the load, the sender repeatedly sends the message $\langle m_i, 0 \rangle$ to the receiver and the sender receives enough ACK from the receiver. 

The receiver sends ACK only when it receives a message from the sender. The sender waits to receive $2k+1$ ACKs before sending the next message $\langle m_i, 1 \rangle$. During this communication whenever the receiver identifies two consecutive messages $\langle m_i, 0 \rangle$ and then $\langle m_i, 1 \rangle$, the receiver delivers $m_i$ to the upper layer. Immediately after delivering a message, the receiver \emph{``cleans"} the possible corrupted incoming messages (including huge value of load by corruption) by ignoring the next \textit{k} messages.  Note that by the nature of transient faults the receiver may accept a phantom deal due to the arrival of a corrupted message that may not have originated from the sender. Still, following one such phantom deal, the links are cleaned and deals are based on actual load reports between neighbors and respects the higher load to lesser load invariant.
 
The system setting for self-stabilizing load balancing algorithms is the same as the system settings in the asynchronous load balancing algorithm. So additional changes in Algorithm \ref{algo:ssLoad} are highlighted in blue color in comparison to Algorithm~\ref{algo:a4}. Self-Stabilizing Data-Link algorithm ensures that message fetched by sender should be delivered by receiver without duplications and omissions. Similarly, \emph{DataLinkArrival}, \emph{DataLinkSend}, and \emph{DataLinkReception} ensure \emph{send}, \emph{arrival}, and \emph{reception} of message without duplications and omissions.

\begin{theorem} 
Algorithm~\ref{algo:ssLoad} ensures that the unknown contents (duplicate and omitted) of the link are controlled, and eliminates undesired messages from the link.
\end{theorem}
\begin{proof}
In asynchronous round of execution the node delivers duplicated or omitted message over k-bounded channel. Data-Link algorithm ensures, that the receiver node sends acknowledgement only when it receives message from sender node. Whenever the sender node sends  message to receiver node with either 0 or 1 bit, receiver node responds with an acknowledgement to the sender node. The sender sends messages in the alternate bit to the one that the receiver-node delivers the message from, swallowing of $k$ messages from transit after delivering the message ensures that the load always moves from the higher loaded node to lesser node. Thus, the new incoming loads transfer in without duplicate or omitted messages.
\end{proof}

\section{Conclusions}
\label{s:conclusion}
We have presented the class of local deal-agreement based load balancing algorithms and demonstrated a variety of monotonic anytime ones. Many 
details and possible extensions 
are omitted from this version. Still, we note that our scheme and concept can be easily extended e.g., to transfer loads directly up to a certain distance and/or to restrict the distance of load exchange with other nodes in the graph. The self-stabilizing solution can be extended to act as a super-stabilizing algorithm  \cite{DBLP:journals/cjtcs/DolevH97}, gracefully, dealing with dynamic settings, where nodes can join/leave the graph anytime, as well as handle loads received/dropped.

\bibliographystyle{splncs04}
\bibliography{ref}

\newpage
\appendix

\remove{
\section{Proofs for Section \ref{s:concentrated}}
\label{s:SPA}

\begin{customlemma}{\ref{l:total_decrease}}
	For any positive $\beta$, after at most $(2nD+1) \ln(\lceil nK^2/\beta \rceil)$ rounds, the potential of $G$ will permanently be at most $\beta$.
\end{customlemma}

\begin{proof}
	Consider any state of $G$, with discrepancy $K$. The potential of each node is at most $K^2$; hence, the potential of $G$ is at most $n K^2$. By Lemma~\ref{l:round_decrease}, the relative decrease of $p(G)$ at any round is at least by a factor of $1-\frac{K^2/2D}{n K^2} = 1-\frac 1{2n D}$. Therefore, after $2nD+1$ rounds, the relative decrease of $p(G)$ will be at least by a factor of $e$. Hence, after $(2nD+1) \ln(\lceil nK^2/\beta \rceil)$ rounds, beginning from the initial state, the potential will decrease by a factor of at least $\lceil nK^2/\beta \rceil$. Thus, it becomes at most $\beta$ and will never increase, as required.
\end{proof}

\begin{customlemma}{\ref{l:monotonic}}
	As the result of any single round of Algorithm~\ref{algorithm:new_cont}$:$ 
	\begin{enumerate}
	    \item If node $u$ transferred load to node $v$, $load(v)$ does not become greater than $load(u)$ at the end of round.
	    \item No node load strictly decreases to $L_{min}$ or below and no node load strictly increases to $L_{max}$ or above. Therefore, the algorithm is monotonic.
	    \item The load of at least one node with load $L_{min}$ strictly increases and the load of at least one node with load $L_{max}$ strictly decreases.
	\end{enumerate}
\end{customlemma}

\begin{proof}
	Consider an arbitrary round. By the description of a round, each node $u$ participates in at most two load transfers:  one where $u$ receives a load and one where it transfers a load. Let $u$ transfer load $d>0$ to $v$. Since the transfer is fair, that is $load(u)-load(v) \ge 2d$, the resulting load of $u$ is not lower than that of $v$. The additional effect of other transfers to $u$ and from $v$ at the round, if any, can only increase $load(u)$ and decrease $load(v)$, thus proving the first statement of Lemma.
	
	By the same reason of transfer safety, the new load of $u$ is strictly greater than the old load of $v$, which is at least $L_{min}$. The analysis of load of $v$ and $L_{max}$ is symmetric. We thus arrive at the second statement of Lemma.
	
	For the third statement, consider nodes $x$ and $y$ as defined at the beginning of proof of Lemma~\ref{l:round_decrease}. Denote the node that $x$ proposes a transfer to by $w$. Let node $w$ accept the proposal of node $z$ (maybe $z \neq x$). By the choosing rule at $w$, $load(z) \ge load(x) = L_{max}$, that is $load(z) = L_{max}$. Note that the load of $z$ strictly decreases after the transfer is accepted by $w$. Moreover, no node transfers load to $z$, since the load of $z$ is highest among all of the nodes.
	We consider the load changes related to $y$ similarly, thus proving the third statement.
\end{proof}

\begin{customthm}{\ref{t:algorithm_cont}}
\begin{sloppypar*}
	Algorithm~\ref{algorithm:new_cont} is monotonic.
	After  at most $(6n+3) D \ln(\lceil nK^2/(\epsilon^2/2) \rceil) = O(nD \log(nK/\epsilon))$ time of its execution, the discrepancy of $G$ will permanently be at most $\epsilon$.
\end{sloppypar*}	
\end{customthm}

\begin{proof}

\begin{sloppypar*}
Algorithm~\ref{algorithm:new_cont} is monotonic by Lemma~\ref{l:monotonic}. By Lemma~\ref{l:total_decrease}, after  $(6n+3) D \ln(\lceil nK^2/(\epsilon^2/2) \rceil) = O(nD \log(nK/\epsilon))$ time, the potential of $G$ will become permanently be at most $\epsilon^2/2$. In such a state, let $u$ and $v$ be nodes with load $L_{max}$ and $L_{min}$, respectively. Then, the current potential of $G$ will be at least $p(u)+p(v) = (L_{max}-L_{avg})^2 + (L_{min}-L_{avg})^2 \ge 2 (K/2)^2$. Thus, $K^2/2 \le p(u)+p(v) \le p(G) \le \epsilon^2/2$, and the desired bound $K \le \epsilon$ follows.
\end{sloppypar*}
\end{proof}

\begin{customlemma}{\ref{l:round_decrease_disc}}
	If the discrepancy of $G$ at the beginning of some round is $K \ge 2D$, the potential of $G$ decreases after that round by at least $K^2/8D$.
\end{customlemma}

\begin{proof}
\begin{sloppypar*}
	In this proof, we omit some details similar to those in the proof of Lemma~\ref{l:round_decrease}. Consider an arbitrary round. Let $x$ and $y$ be nodes with load $L_{max}$ and $L_{min}$, respectively, and let P be a shortest path from $y$ to $x$, $P=(y=v_0, v_1, v_2, \dots, v_k=x)$. Note that $k \le D$. Consider the sequence of edges $(v_{i-1}, v_i)$ along $P$, and choose its sub-sequence $S$ consisting of all edges with $\delta_i = load(v_i) - load(v_{i-1}) \ge 1$.  Let $S = (e_1=(v_{i_1 -1}, v_{i_1}), e_2=(v_{i_2 -1}, v_{i_2}), \dots,  e_{k'}=(v_{i_{k'} -1}, v_{i_{k'}}))$, $k' \le k \le D$. Observe that by the definition of $S$, holds: $load(v_{i_1 -1})=L_{min}$, $load(v_{i_{k'}})=L_{max}$, and for any $1 \le j \le k'$, $load(v_{i_{j-1}}) \ge load(v_{i_j - 1})$. As a consequence, $\sum_{j=1}^{k'} \delta_{i_j} \ge L_{max} - L_{min} = K$.
	The sum of load proposals of all $v_{i_j}$ is $\sum_{j=1}^{k'} \lfloor \delta_{i_j}/2 \rfloor \ge  \sum_{j=1}^{k'} (\delta_{i_j}-1)/2 = (\sum_{j=1}^{k'} \delta_{i_j} - D)/2  \ge (K-D)/2$; by the condition $K \ge 2D$, this sum is at least $K/4$. (Note that for any node $v_{i_j}$ with $\delta_{i_j}=1$, it may happen that it made no proposal. Accordingly, we accounted the contribution of each such node $v_{i_j}$ to the above sum by $\lfloor \delta_{i_j}/2 \rfloor = 0$.)
	
	Each node $v_{i_{j}}$ with $\delta_{i_j} \ge 2$ proposes a transfer to its minimally loaded neighbor; denote it by $w_j$. Note that the transfer amount in that proposal is at least $\lfloor \delta_{i_j}/2 \rfloor$. By the algorithm, each node $w_i$ accepts the biggest proposal sent to it, of amount of at least $\lfloor \delta_{i_j}/2 \rfloor$.  
	
	Consider the simple case when all nodes $w_j$ are different. Then by Lemma~\ref{l:fair}, the total potential decrease at the round, $\Delta$, is at least $2 \sum_j \lfloor \delta_{i_j}/2 \rfloor ^2$. By simple algebra, for a set of numbers with a fixed sum and a bounded amount of numbers, the sum of numbers' squares is minimal if the quantity of those numbers is maximal and the numbers are equal. In our case, we obtain $\Delta \ge 2(D  (K/4D)^2) = K^2/8D$, as required.
	
	The reduction of the general case to the simple case is as in the proof of Lemma~\ref{l:round_decrease}.
\end{sloppypar*}
\end{proof}

\begin{customlemma}
	{\ref{l:monotonic_disc}}
	As the result of any single round of Algorithm~\ref{algorithm:new_disc}$:$ 
	\begin{enumerate}
	    \item If node $u$ transferred load to node $v$, $load(v)$ does not become greater than $load(u)$ at the end of round.
	    \item No node load strictly decreases to $L_{min}$ or below and no node load strictly increases to $L_{max}$ or above. Therefore, the algorithm is monotonic.
	\end{enumerate}
\end{customlemma}

The proof is similar to that of Lemma~\ref{l:monotonic}.

\begin{customlemma}{\ref{l:total_decrease_disc}}
	After at most $(8nD+1) \ln(\lceil nK^2/(2D^2) \rceil)$ rounds, the discrepancy will permanently be less than $2D$.
\end{customlemma}

\begin{proof}
    Assume to the contrary that the statement of Lemma does not hold.
    By Lemma~\ref{l:monotonic_disc}, Algorithm~\ref{algorithm:new_disc} is monotonic, and hence the discrepancy never increases. Therefore, during the first $(8nD+1) \ln(\lceil nK^2/(2D^2) \rceil)$ rounds of the algorithm, the discrepancy permanently is at least $2D$.
    Note that since the discrepancy is at least $2D$, the potential of $G$ is at least $2 D^2$.
    
	Consider any state of $G$, with discrepancy $K \ge 2D$. The potential of each node is at most $K^2$; hence, $p(G) \le n K^2$. By Lemma~\ref{l:round_decrease_disc}, the relative decrease of $p(G)$ at any round is at least by a factor of $1-\frac{K^2/8D}{n K^2} = 1-\frac 1{8n D}$. Therefore, after $8nD+1$ rounds, the relative decrease of $p(G)$ will be more than by a factor of $e$. Hence after $(8nD+1) \ln(\lceil nK^2/(2D^2) \rceil)$ rounds, beginning from the initial state, the potential will decrease by a factor of more than $\lceil nK^2/(2D^2) \rceil$. Thus, it becomes less than $2D^2$, a contradiction.
\end{proof}

\begin{customthm}{\ref{t:algorithm_disc}}
\begin{sloppypar*}
	Algorithm~\ref{algorithm:new_disc} is monotonic.
	After at most $(24n+3) D \ln(\lceil nK^2/(D^2/2) \rceil) + 6n D^2 = O(nD \log(nK/D) + n D^2)$ time of its execution, the graph state will become 1-Balanced and fixed.
	\end{sloppypar*}
\end{customthm}

\begin{proof}
Algorithm~\ref{algorithm:new_disc} is monotonic by Lemma~\ref{l:monotonic_disc}. By Lemma~\ref{l:total_decrease_disc}, after  $(24n+3) D \ln(\lceil nK^2/(2D^2) \rceil) = O(nD \log(nK/D))$ time, the discrepancy will become less than $2D$. In such a state, the potential $p(G)$ is less than $n(2D)^2=4n D^2$. At each round before arriving at a 1-Balanced state, at least one transfer is executed, and thus the potential decreases by at least 2. Hence after at most $2n D^2$ additional rounds in not 1-Balanced states, the potential will vanish, which means the fully balanced state. Summarizing, a 1-Balanced state will be reached after the time as in the statement of Theorem. By the description of the algorithm, it will not change after that.
\end{proof}
\section{Multi-Neighbor Load Balancing Algorithm} 
\label{s:adiffusion}

\remove{
\textcolor{purple}{
 In this section, we present another monotonic synchronous load balancing algorithm, which is based on \emph{distributed proposals}. There, each node may propose load transfers to several of its neighbors, aiming in equalizing the loads in its neighborhood as much as possible. We formalize this as follows.
 Consider node $p$ and the part $\mathcal{V}_{less}(p)$ of its neighbors with loads smaller than $load(p)$. The plan of $p$ is to propose to nodes in $\mathcal{V}_{less}(p)$ in such a way that if all its proposals would be accepted, then the resulting minimal load in the node set $\mathcal{V}_{less}(p) \cup \{p\}$ will be maximal. 
 Note that for this, the resulting loads of $p$ and of all nodes that it transfers loads to should become equal. (Compare with the scenario, where we pour water into a basin with unequal heights at its bottom: the water surface will be flat.) Such proposals may be planned as follows.
 }
 
 \textcolor{purple}{
 Let us sort the nodes in $\mathcal{V}_{less}(p)$ from the minimal to the maximal load. Consider an arbitrary prefix of that ordering, ending by node $r$; denote by $avg_r$ the average of loads of $p$ and the nodes in that prefix.
 Let the maximal prefix such that $avg_r > load(r)$ be with $r=r^*$; we denote the node set in it by $\mathcal{V}'_{less}(p)$. Node $p$ proposes to each node $q \in \mathcal{V}'_{less}(p)$ a transfer of value $avg_{r^*} - load(q) > 0$. If all those proposals would be accepted, then the resulting loads of $p$ and of all nodes in $\mathcal{V}'_{less}(p)$ will equal $avg_{r^*}$.
 In the discrete problem setting, some of those proposals should be $\lfloor avg_{r^*} - load(q) \rfloor$ and the others be $\lceil avg_{r^*} - load(q) \rceil$, in order to make all proposal values integers; then, the planned resulting loads will be equal up to 1 (a 1-balanced state in $\mathcal{V}_{less}(p) \cup \{p\}$).
}
 
 \textcolor{purple}{
 Consider the following generic ``lucky'' example, where the running time of a distributed proposal algorithm is $O(1)$. Let the graph consist of $n$ nodes: $p$ and $n-1$ its neighbors, with an \emph{arbitrary set of edges} between the neighbors. Let $load(p)=n^2$ and the loads of its neighbors be \emph{arbitrary integers} between 0 and $n$. Node $p$ will propose to all its neighbors. If the strategy of proposal acceptance would be by preferring the maximal transfer suggested to it, then all proposals of $p$ will be accepted, and thus the resulting loads in the entire graph will equalize after a \emph{single round} of proposal/acceptance.
 Another scenario where distributed proposals provide a drastic discrepancy decrease in time  $O(1)$ is as follows.  } 
  }

 \textcolor{purple}{As another example motivating distributed proposals,} let us consider the on-line setting where the discrepancy is maintained to be usually small, but large new portions of load can arrive at some nodes at unpredictable moments. Let load $L >> L_{max}$ arrive at node $p$ with $r$ neighbors. The next round will equalize the loads at $p$ and all its neighbors. Then, the maximal load will become at most $L_{max}+(L-L_{max})/((r+1)$, that is the initial discrepancy jump by $L-L_{max}$ will be provably decreased by a factor of $r+1$ in time $O(1)$.

Algorithm~\ref{algo:a1} implements the described distributed proposal approach for the discrete problem setting. The proposal acceptance strategy in it is also based on a similar neighborhood-equalizing idea. 
The set of neighbors of node $p$ with loads greater than $load(p)$ is denoted there by $\mathcal{V}_{more}(p)$.
We believe that the distributed proposal algorithms as a generalization of single proposal ones benefit from the same complexities. 
Though, 
we present a simple analysis which assists us in the (proof-based) design of the algorithms. 
%
\begin{lemma}
\label{lemma:lemma1}
As the result of any round of Algorithm \ref{algo:a1}, the maximum individual load does not increase and the minimum individual load does not decrease.
\end{lemma}

\begin{algorithm}[H]
\caption{Distributed Proposal: Multi-Neighbor Load Balancing Algorithm}
\label{algo:a1}
\KwIn{An undirected graph $G = (V,E, load)$}
\KwOut{Graph $G$ in a $1$-Balanced state}


Each node $p$ repeatedly executes line 3 to 19 and 20 to 29 concurrently\\

\SetKwFunction{FMain}{}
         \SetKwProg{Fn}{Upon a PULSE execute}{}{}
         \DontPrintSemicolon
            \Fn{\FMain}{
                Node $p$ reads the load of its neighbors\\
                Compute $\mathcal{V}_{less}$ = $\mathcal{V}_{less}(p)$, $\mathcal{V}_{more}$ = $\mathcal{V}_{more}(p)$ \\
                
                \If{$q = |\mathcal{V}_{less}| > 0$}{ 
                    $TentativeLoad = load(p)$ \\
                    Sort and number $\mathcal{V}_{less}$ in a non-decreasing fashion as $p_1, p_2, \dots p_q$ \\
                    \For {$i=1$ to $q$} {$prop(i)=0$}
                    $i = 1$ \\
                    
                     \While{$TentativeLoad \ge load(p_i) + prop(i) + 2$}{
                        $TentativeLoad = TentativeLoad - 1$ \\
                        $prop(i) = prop(i) + 1$ \\
                        \If{$i<q \land load(p_i)+prop(i) > load(p_{i+1}) + prop(i+1)$} 
                            {$i = i + 1$} 
                        \Else {$i = 1$} 
                        }
                        
                    \For {$i=1$ to $q$} 
                         {Send a proposal ($prop(i), TentativeLoad$) to $p_i$}
                        
                }
                 \SetKwFunction{FMain}{}
                 \SetKwProg{Fn}{Upon arrival of proposal}{(ProposeToTransfer, TentativeLoad)}{}
                 \DontPrintSemicolon
                 \Fn{\FMain}{ 
                        $MaxLoad$ = Load of Maximum loaded node from $\mathcal{V}_{more}$\\
            
                        \textit{LoadToReceive} = $MaxLoad - load(p) - 1$\\
                      
                        Sort $\mathcal{V}_{more}$ in non-increasing fashion\\
                        
                        \While{$|\mathcal{V}_{more}| > 0 \land$ \textit{LoadToReceive} $> 0$}{
                        \textit{ProposeToReceive} = $\lfloor$ \textit{LoadToReceive}$/|\mathcal{V}_{more}| \rfloor$\\
                         
                         \textbf{Final Deal:} Receive the additional load \textit{ProposeToReceive} from each node of $\mathcal{V}_{more}$ in Round-Robin fashion:\\

                         \nonl \If{Load of node from $\mathcal{V}_{more}$ == \textit{TentativeLoad}}{
                         \nonl Remove node from $\mathcal{V}_{more}$ 
                         }
                       
                        $Deal$ = \textit{min(ProposeToTransfer, ProposeToReceive)}\\
                        $load(p) = load(p) + Deal$\\
                        $load(q) = load(q) - Deal$
                                     
                        } 
                 }
            }
 
\end{algorithm}

\begin{proof}
 Consider an arbitrary round, where $L_{max}$ holds the maximum individual load and $L_{min}$ holds the minimum individual load. When $|\mathcal{V}_{less}| \neq 0$ then node $p$ proposes a proposals to each node of $\mathcal{V}_{less}$  (Line 19) but if the load of any node from $\mathcal{V}_{less}$ exceeds \textit{TentativeLoad} then remove that node from $\mathcal{V}_{less}$. Similarly, when the nodes receive the proposal then node $p$ receives the deal from each node in $\mathcal{V}_{more}$  (Line 26) but if the load of any node from $\mathcal{V}_{more}$ reaching less than \textit{TentativeLoad} then remove that node from $\mathcal{V}_{more}$. Thus, the maximum individual load does not increase and the minimum individual load does not decrease.
\end{proof}

\noindent
Define the set of nodes with the minimum individual load:
\vspace{-0.1cm}
\begin{flushright}
$\mathcal{SL}_{min} = \{p \in \mathcal{N} | load(p)  = {L}_{min}\}$, where ${L}_{min} = \min\limits_{p \in \mathcal{N}}load(p)$\\
\end{flushright}

\noindent
Similarly, define the set of nodes with the maximum individual load:
\vspace{-0.1cm}
\begin{flushright}
$\mathcal{SL}_{max} = \{p \in \mathcal{N} | load(p)  = {L}_{max}\}$, where ${L}_{max} = \max\limits_{p \in \mathcal{N}}load(p)$
\end{flushright}

\begin{lemma}
\label{lemma:lemma2}
Algorithm \ref{algo:a1} guarantees that no node joins the sets  $\mathcal{SL}_{min}$ and $\mathcal{SL}_{max}$.
\end{lemma}
\begin{proof}{}
First we show that no node joins $\mathcal{SL}_{min}$. A node $p$ that does not belong to $\mathcal{SL}_{min}$ must give loads in order to join $\mathcal{SL}_{min}$. However, before proposing the load to send in each round node $p$ compares the \textit{TentativeLoad} with the load of each node in $\mathcal{V}_{less}$, So, Algorithm \ref{algo:a1} ensures that node $p$ never proposes to give load amount that makes his load less than $L_{min}$.

Analogously, we show that no node joins $\mathcal{SL}_{max}$.
A node $p$ that does not belong to $\mathcal{SL}_{max}$ must receive enough loads in order to join $\mathcal{SL}_{max}$. However, before receiving the load in each round node $p$ compares the \textit{TentativeLoad} with the load of each node in $\mathcal{V}_{more}$, So, Algorithm \ref{algo:a1} ensures that node $p$ never receives load amount that makes its load more than $L_{max}$. Hence, no node joins the sets  $\mathcal{SL}_{min}$ and $\mathcal{SL}_{max}$.
\end{proof}

\begin{corollary}
In any round, as long as the difference between any neighboring pair is not 1 or 0, then repeatedly, the size of $\mathcal{SL}_{min}$ and/or $\mathcal{SL}_{max}$ monotonically shrinks until the gap between $L_{max}$ and $L_{min}$ is reduced. Therefore the system convergences toward being \textit{1-Balanced} while the difference between any neighboring pair is greater than 1.
\end{corollary}
\begin{proof}
Lemma \ref{lemma:lemma2} establishes no node joins $\mathcal{SL}_{min}$ and $\mathcal{SL}_{max}$. Algorithm \ref{algo:a1} executes repeatedly until the deals happen, and load transfers from the higher loaded node to a lesser loaded node. So, a member of one of $\mathcal{SL}_{min}$ and $\mathcal{SL}_{max}$ sets leaves. Lemma \ref{lemma:lemma1} ensures that no lesser (higher) load value than the values in $\mathcal{SL}_{min}$ ($\mathcal{SL}_{max}$, respectively) is introduced. Once one of these sets becomes empty, a new set is defined instead of the empty set, implying a gap shrink between the values in the two sets.
\end{proof}

\begin{lemma}
\label{lemma:lemma3}
Algorithm \ref{algo:a1} guarantees  potential function converges after each load transfer. 
\end{lemma}

\begin{proof}
Consider an arbitrary round, where $load(p)$ represents the load node of $p$ and $L_{avg}$ computes the average load in the whole graph. $L_{avg}$ is the same in any round.
We consider a potential function for analyzing convergence of algorithm: $\sum (load(p) - L_{avg})^ 2$. 
Assumes node $i$ transferring 1 unit load to node $j$, where $L_{i} > L_{j}$ . Here we analyze that the potential function is decreasing after transferring the load from the higher load to the lower load after each deal. Potential function value before transferring the load:
$(load(i) - L_{avg})^ 2 + (load(j) - L_{avg} )^ 2$ . Potential function value after transferring the load: $(load(i) -1 - L_{avg})^ 2 + (load(j) + 1- L_{avg})^ 2$ . Potential function difference should shrink namely:
$ ((load(i) - L_{avg})^ 2 + (load(j) - L_{avg} )^ 2) - ((load(i) -1 - L_{avg})^ 2 + (load(j) +1- L_{avg})^ 2)  > 0$ 

After expansion:  $ load(i) > load(j) + 1$, Which follows our required $ load(i) > load(j) + 1$ condition for the algorithm to finalize a deal. Since $L_{avg}$ is fixed. According to Algorithm \ref{algo:a1}, condition ensures legitimate load transfer from the higher load to the lower load. As node $p$ receives the load same as $L_{avg}$ but condition $|\mathcal{V}_{less}| \neq 0$ satisfies, then node $p$ starts transferring the load to the less loaded node by which potential function converges.
\end{proof}

\begin{theorem}
Algorithm \ref{algo:a1} is monotonic. After $O(nK^2)$ time the initial discrepancy of $K$ will permanently be 1-Balanced, where $n$ is the total number of nodes in the graph.
\end{theorem}

\begin{proof}
Lemma \ref{lemma:lemma1} and \ref{lemma:lemma2} establish the monotonicity of Algorithm \ref{algo:a1}. Since at least one deal is executed in a constant number of message exchanges (read loads, proposals, deals) the algorithm takes time proportional to the total number of deals. Deals are executed until all nodes are \emph{1-Balanced}. Thus, if at least one deal is happening in $O(1)$ time then the algorithm will converge in $O(nK^2)$ time.
\end{proof}
\section{Asynchronous Load Balancing Algorithm}
\label{s:aasync}

We consider that standard settings of asynchronous systems, see e.g.,~\cite{10.5555/335041,10.5555/525656}. A configuration of asynchronous systems is described by a vector of states of the nodes and message (FIFO) queues, one queue for each edge. The message queue consists of all the messages sent, over the edge, and not yet received. System configuration is changed by an atomic step in which a message is sent or received (local computations are assumed to take negligible time). An atomic step in which a message is sent in line 26 of Algorithm \ref{algo:a4}, is called the atomic \textit{deal} step, or simply a deal.

Execution of Algorithm~\ref{algo:a4} is as follows: Every time each node makes copy of $load(p)$ in $TLoad(p)$, reads the load of its neighbors and computes $\mathcal{V}_{less}(p)$ and makes a copy in $\mathcal{V}_{less}$ . If $ |\mathcal{V}_{less}| \neq 0$ then each node computes \textit{MinLoad}, which stores the load of the minimum loaded node from $\mathcal{V}_{less}$. Computes \textit{LoadToTransfer} by computing $ \lfloor(TLoad(p) - MinLoad)/2 \rfloor $. Also computes \textit{TentativeLoad} by computing $TLoad(p) - LoadToTransfer$. For each node of $\mathcal{V}_{less}$ whose load is less than \textit{TentativeLoad} will be added into $\mathcal{PV}_{less}$.  After deciding the nodes in $\mathcal{PV}_{less}$ node $p$ sends proposal to each node of $\mathcal{PV}_{less}$ with proposal and \textit{TentativeLoad}.

Upon arrival of proposal each node individually checks $TentativeLoad_q - TLoad(p) > 0$ If satisfied, computes the \textit{Deal} by computing $ min( (TentativeLoad_q - TLoad(p)), Proposal_q)$. Sends the acknowledgement message as \textit{Deal} to each neighbor, updates the \textit{LastReceivedLoad} and $TLoad(p)$ by adding \textit{Deal} into them. Otherwise send \textit{0} as acknowledgement message. Node $p$ waits for acknowledgement from each node of $\mathcal{PV}_{less}$ and once it has received \textit{AckMsg} with the deal from its neighbors, updates the \textit{LastGaveLoad} and then node $p$ sets own acknowledgement \textit{True}.

\begin{algorithm}[H]
\caption{Asynchronous Load Balancing Algorithm }
\label{algo:a4}
\KwIn{An undirected graph $G = (V,E, load)$}
\KwOut{Graph $G$ in a 1-Balanced state}

$LastReceivedLoad = 0$\\
$LastGaveLoad = 0$\\

Each node $p$ repeatedly executes line 5 to 22, 23 to 30, and 31 to 33 concurrently  \\
         \SetKwFunction{FMain}{}
         \SetKwProg{Fn}{Execute forever}{  do}{}
         \DontPrintSemicolon
          \Fn{}{

                    $load(p) = load(p) +$ \textit{LastReceivedLoad} -\textit{ LastGaveLoad}\\
                    \textit{LastReceivedLoad}= 0\\
                    \textit{LastGaveLoad} = 0\\
                    $TLoad(p) = load(p)$\\
                    Node $p$ reads the load of its neighbors\\
                    Compute $\mathcal{V}_{less}$ = $\mathcal{V}_{less}(p)$ \\
                   
                      \If{$ |\mathcal{V}_{less}| \neq 0$}{    
                              $MinLoad$ = Load of minimum loaded node from $\mathcal{V}_{less}$\\
                              \textit{LoadToTransfer} = $ \lfloor(TLoad(p) - MinLoad)/2 \rfloor $\\
                              \textit{TentativeLoad} = $ TLoad(p)$ - \textit{LoadToTransfer} \\
                            
                             \SetKwFunction{FMain}{}
                             \SetKwProg{Fn}{for every $q \in \mathcal{V}_{less}$}{  do}{}
                             \DontPrintSemicolon
                             \Fn{}{
                                 \If{$load(q) <$ \textit{ TentativeLoad}}{
                                    $\mathcal{PV}_{less} = \mathcal{PV}_{less} \cup \{ q \}$ 
                                 }
                             }
                             $P_p$ = \textit{  RRProposal(LoadToTransfer, }$\mathcal{PV}_{less},$ \textit{TentativeLoad)} \\

                             \SetKwFunction{FMain}{}
                             \SetKwProg{Fn}{For every $q \in \mathcal{PV}_{less}$}{  do}{}
                             \DontPrintSemicolon
                             \Fn{}{
                             
                             Send ($P_p[q]$, \textit{TentativeLoad}) \\
                             $Ack_q = False$ \\
                             }

                             Wait until $Ack_q = True$ for every $q$ in $\mathcal{PV}_{less}$,
                            
                        } 
 }

         \SetKwFunction{FMain}{}
         \SetKwProg{Fn}{upon arrival of}{ from neighbor $q$  do}{}
         \DontPrintSemicolon
            \Fn{\FMain{$Proposal_q, TentativeLoad_q$}}{
            \If{$TentativeLoad_q - TLoad(p) > 0$
            }{
            $ Deal = min( (TentativeLoad_q - TLoad(p)), Proposal_q) $\\
            send to $q$ $AckMsg = Deal$\\
            
            \textit{LastReceivedLoad} = \textit{LastReceivedLoad} + \textit{Deal} \\
            $TLaod(p)$ = $TLoad(p)$ + \textit{Deal}
            }
            \Else{
            send to $q$ $AckMsg = 0$
            }
            }
      
        \SetKwFunction{FMain}{}
                    \SetKwProg{Fn}{upon}{ do}{}
                    \DontPrintSemicolon
                        \Fn{\FMain{$AckMsg \ reception \ from \ q $}}{
                        \textit{LastGaveLoad} =  \textit{LastGaveLoad} + \textit{AckMsg.Deal}\\
                         $Ack_q = True$
                         }  
                  
         \end{algorithm}
\setlength{\interspacetitleruled}{0pt}%
\setlength{\algotitleheightrule}{0pt}%
\begin{algorithm}[t]
\LinesNumbered
\setcounter{AlgoLine}{33}

        \SetKwFunction{FMain}{}
         \SetKwProg{Fn}{Procedure RRProposal}{}{}
         \DontPrintSemicolon
             \Fn{\FMain{\textit{LoadToTransfer}, $\mathcal{PV}_{less}$, \textit{TentativeLoad}}}{
             $\mathcal{TV}_{less} = \mathcal{PV}_{less}$ \\
             \textit{LeftLoadToTransfer} = \textit{LoadToTransfer}
             
             \While{$|\mathcal{TV}_{less}| > 0 \land$ \textit{LeftLoadToTransfer} $> 0$}{

                $m$ = $max (\mathcal{TV}_{less})$\\
                    \If{$ (TentativeLoad - m)\times |\mathcal{TV}_{less}|  \leq$ \textit{LeftLoadToTransfer}}{
                    
                    Update $P_p$ to propose to transfer additional load (\textit{TentativeLoad}-\textit{m}) to every node in $\mathcal{TV}_{less}$, and subtract from \textit{LeftLoadToTransfer}: \\
                   \nonl \If{Load of node from $\mathcal{TV}_{less}$ == \textit{TentativeLoad}}{
                    \nonl Remove node from $\mathcal{TV}_{less}$
                     }
                    
                    }

                \Else{
                    
                    Update $P_p$ to propose to transfer additional loads \textit{LeftLoadToTransfer} to node of $\mathcal{TV}_{less}$ in Round-Robin fashion, and subtract from \textit{LeftLoadToTransfer}
                }
             }
              \KwRet $P_p$   
              }

                  \end{algorithm}

The \textit{Round-Robin Proposal} starts by making a copy of  $\mathcal{PV}_{less}$ in  $\mathcal{TV}_{less}$ and \textit{LoadToTransfer} in \textit{LeftLoadToTransfer}. It  Keep updating proposal until $|\mathcal{TV}_{less}| > 0 \land LeftLoadToTransfer > 0$. If this condition satisfies, it store the maximum load of maximum loaded node of $\mathcal{TV}_{less}$ in \textit{m}. For each node of $\mathcal{TV}_{less}$ if $ (TentativeLoad - m)\times |\mathcal{TV}_{less}|  \leq LeftLoadToTransfer$ condition satisfied then it update $P_p$ to propose to transfer additional load $(TentativeLoad-m)$ to every node in $\mathcal{TV}_{less}$, and subtract from $LeftLoadToTransfer$. Any node from $\mathcal{TV}_{less}$ that has already received load equal to \textit{TentativeLoad} then remove that node from the $\mathcal{TV}_{less}$. If previous condition does not satisfy, then Update $P_p$ to propose to transfer additional loads \textit{LeftLoadToTransfer} to node of $\mathcal{TV}_{less}$ in the Round-Robin fashion, and subtract from $LeftLoadToTransfer$ and return the proposal $P_p$. Hence, node $p$ updates its own load by adding \textit{LastreceivedLoad} and subtracting \textit{LastGaveLoad}. As a result a deal completion happens.

During the execution of Algorithm \ref{algo:a4} each node repeatedly executes lines 5 to 22 (send proposal), 23 to 30 (upon arrival of the proposal), and 31 to 33 (upon acknowledgment reception) concurrently and forever.

\begin{lemma}
\label{l:async1}
In every deal the load transfer is from the higher loaded node to the lower loaded node.
\end{lemma}{}
\begin{proof}
We analyzed this using the interleaving model, in this model at the given
time only a single processor executes an atomic step. Each atomic step
consists of internal computation and single communication operation
(Send-Receive message). The atomic step may consist of local computations
(e.g., computation of load transfer between two nodes). The asynchronous algorithm
ensures that the local computation between two nodes is assumed to be before
the second communication starts. Consider an example when a node $q$ of
$\mathcal{PV}_{less}$ receives a proposal, the deal happens between node $p$ and
node $q$. In this case \textit{TentativeLoad} of node $p$ is always greater than the load of node $q$ because node $p$ is waiting for acknowledgments from all nodes of $\mathcal{PV}_{less}$.
\end{proof}{}

\begin{lemma}
\label{l:async2}
As a result of any round of Algorithm \ref{algo:a4}, the maximum individual load does not increase and the  minimum individual load does not decrease.
\end{lemma}
\begin{proof} Consider an arbitrary deal, in each deal node $p$ checks with the neighboring nodes ($\mathcal{V}_{less}$), those nodes whose load is less than \emph{TentativeLoad} will become part of $\mathcal{PV}_{less}$ and receive the proposal from node $p$. Upon the proposal arrival the node computes the \textit{Deal} and picks the minimum load among ($TentativeLoad_q - TLoad(p)$) and receives the proposal. This ensures that no node receives the additional load by which they exceed the maximum individual load, similarly, no node gives more loads, by which they retain less than the minimum individual load. 
\end{proof}

\begin{theorem} 
Algorithm \ref{algo:a4} is monotonic. After $O(nK^2)$ time the initial discrepancy of $K$ will permanently be 1-Balanced, where $n$ is the total number of nodes in the graph.
\end{theorem}
\begin{proof}
 Lemma \ref{l:async1} and \ref{l:async2} establish the monotonicity of algorithm \ref{algo:a4}.  
We now show that at least one deal is executed during a constant
number of messages exchanges. Assume towards contradiction that no
deal is executed, and hence, the loads are constant. The first reads in
these fixed loads execution must result in a correct value of loads,
and therefore followed by correct proposals and deals, hence, the
contradiction. Since at least one deal is executed in a constant number of message exchanges (read loads, proposals, deals) the algorithm takes time proportional to the total number of deals. Deals are executed until all nodes are \emph{1-Balanced}.  Thus, when at least one deal is happening in $O(1)$ time then the algorithm will converge in $O(nK^2)$ time.
 \end{proof} 
 \section{Self-Stabilizing Load Balancing Algorithm}

 Here we use the concept of Self-Stabilizing Data-Link with \emph{k-bounded} channel, which is responsible for the eventual sending and receiving of loads. Starting in an arbitrary configuration with arbitrary messages in transit, a reliable data link eliminates the possibility of corrupted messages in the transient, and ensures that the actual load values are communicated among the neighbors. The retransmission of messages helps to avoid deadlocks, ensuring the arrival of an answer when waiting for an answer. In order to deliver the load, the sender repeatedly sends the message $\langle m_i, 0 \rangle$ to the receiver and the sender receives enough ACK from the receiver. 

The receiver sends ACK only when it receives a message from the sender. The sender waits to receive $2k+1$ ACKs before sending the next message $\langle m_i, 1 \rangle$. During this communication whenever the receiver identifies two consecutive messages $\langle m_i, 0 \rangle$ and then $\langle m_i, 1 \rangle$, the receiver delivers $m_i$ to the upper layer. Immediately after delivering a message, the receiver \emph{``cleans"} the possible corrupted incoming messages by ignoring the next \textit{k} messages. Note that by the nature of transient faults the receiver may accept a phantom deal due to the arrival of a corrupted message that may not have originated from the sender. Still, following one such phantom deal, the links are cleaned and deals are based on actual load reports between neighbors and respects the higher load to lesser load invariant.

\setlength{\interspacetitleruled}{1.5pt}%
\setlength{\algotitleheightrule}{0.8pt}%
\begin{algorithm}[H]
\label{algo:receiver}
\caption{Pseudo code for delivering message}
\SetKwFunction{FMain}{}
         \SetKwProg{Fn}{upon message arrival}{  do}{}
         \DontPrintSemicolon
          \Fn{}{
          Send ACK to sender
          }
          
         \SetKwProg{Fn}{upon \textit{last bit} received}{  do}{}
         \DontPrintSemicolon
          \Fn{}{
          \If{(last bit received == 0 $\land$ current bit received == 1)}{
          Deliver the message\\
          last bit received  = current bit received\\
          }
          }
\end{algorithm}

\begin{algorithm}[H]
\caption{Self-Stabilizing Load Balancing Algorithm}
\label{algo:ssLoad}
\KwIn{An undirected graph $G = (V,E, load)$}
\KwOut{Graph G in 1-Balanced state.}

$LastReceivedLoad = 0$\\
$LastGaveLoad = 0$\\

Each node repeatedly executes line 5 to 23, 24 to 31, and 32 to 34 concurrently  \\
         \SetKwFunction{FMain}{}
         \SetKwProg{Fn}{Repeat forever}{  do}{}
         \DontPrintSemicolon
          \Fn{}{
                   
                    $load(p) = load(p) + LastReceivedLoad - LastGaveLoad$\\
                    $LastReceivedLoad = 0$\\
                    $LastGaveLoad = 0$\\
                    $TLoad(p) = load(p)$\\

                     \SetKwFunction{FMain}{}
                     \SetKwProg{Fn}{Repeat forever}{  do}{}
                     \DontPrintSemicolon
                     \Fn{}{
                        Node $p$ reads the load of its neighbors\\
                         Compute $\mathcal{V}_{less}$ = $\mathcal{V}_{less}(p)$ \\
                     }

                      \If{$ |\mathcal{V}_{less}| \neq 0$}{    
                             \textit{MinLoad} = Load of minimum loaded node from $\mathcal{V}_{less}$\\
                             \textit{LoadToTransfer} = $ \lfloor(TLoad(p) - MinLoad)/2 \rfloor $\\
                              \textit{TentativeLoad} = $ TLoad(p) - LoadToTransfer $\\
                            
                             \SetKwFunction{FMain}{}
                             \SetKwProg{Fn}{for every $q \in \mathcal{V}_{less}$}{  do}{}
                             \DontPrintSemicolon
                             \Fn{}{
                                 \If{$load(q) < TentativeLoad$}{
                                    $\mathcal{PV}_{less} = \mathcal{PV}_{less} \cup load(q)$ 
                                 }
                             }
                             $P_p = RRProposal(LoadToTransfer,$ $\mathcal{PV}_{less}, TentativeLoad)$ \\

                             \SetKwFunction{FMain}{}
                             \SetKwProg{Fn}{Repeat forever} {  do}{}
                             \DontPrintSemicolon
                             \Fn{}{
                             
                             DataLinkSend $(P_p[q], TentativeLoad)$ \\
                             $Ack_q = False$ \\
                             
                             }

                            $Ack_q = True$ for every $q$ in $\mathcal{PV}_{less}$
                            
                            
                        } 
 }    
            
    \end{algorithm}
\setlength{\interspacetitleruled}{0pt}%
\setlength{\algotitleheightrule}{0pt}%
\begin{algorithm}[t]
\LinesNumbered
\setcounter{AlgoLine}{23}       

         \SetKwFunction{FMain}{}
         \SetKwProg{Fn}{upon DataLinkArrival of}{ from neighbor $q$  do}{}
         \DontPrintSemicolon
            \Fn{\FMain{$Proposal_q, TentativeLoad_q$}}{
            \If{$TentativeLoad_q - TLoad(p) > 0$
            }{
            $ Deal = min( (TentativeLoad_q - TLoad(p)), Proposal_q) $\\
             DataLinkSend to $q$ $AckMsg = Deal$\\
            
            $LastReceivedLoad = LastReceivedLoad + Deal$ \\
            $TLaod(p) = TLoad(p) + Deal$
            }
            \Else{
             DataLinkSend to $q$ $AckMsg = 0$
            }
            }

        \SetKwFunction{FMain}{}
                    \SetKwProg{Fn}{upon}{ do}{}
                    \DontPrintSemicolon
                        \Fn{\FMain{$AckMsg \ DataLinkReception \ from \ q $}}{
                        $LastGaveLoad =  LastGaveLoad + AckMsg.Deal$\\
                         $Ack_q = True$
                         }  

        \SetKwFunction{FMain}{}
         \SetKwProg{Fn}{Procedure RRProposal}{}{}
         \DontPrintSemicolon
             \Fn{\FMain{$LoadToTransfer$, $\mathcal{PV}_{less}$, $TentativeLoad$}}{
             $\mathcal{TV}_{less} = \mathcal{PV}_{less}$ \\
             $LeftLoadToTransfer$ = $LoadToTransfer$
             
             \While{$|\mathcal{TV}_{less}| > 0 \land LeftLoadToTransfer > 0$}{

                $m$ = $max (\mathcal{TV}_{less})$\\
                    \If{$ (TentativeLoad - m)\times |\mathcal{TV}_{less}|  \leq LeftLoadToTransfer$}{
                    
                    Update $P_p$ to propose to transfer additional load $(TentativeLoad-m)$ to every node in $\mathcal{TV}_{less}$, and subtract from $LeftLoadToTransfer$: \\
                   \nonl \If{Load of node from $\mathcal{TV}_{less}$ == $TentativeLoad$}{
                    \nonl Remove node from $\mathcal{TV}_{less}$
                     }
                    
                    }

                \Else{
                    
                    Update $P_p$ to propose to transfer additional loads \textit{LeftLoadToTransfer} to node of $\mathcal{TV}_{less}$ in Round-Robin fashion, and subtract from $LeftLoadToTransfer$
                }
             }
              \KwRet $P_p$   
              }

\end{algorithm}

\begin{theorem} 
Algorithm~\ref{algo:ssLoad} ensures that the unknown contents (duplicate and omitted) of the link are controlled, and eliminates undesired messages from the link.
\end{theorem}
\begin{proof}
Assume the node delivers duplicated or omitted message over k-bounded channel. Data-Link algorithm ensures, that the receiver node sends acknowledgement only when it receives message from sender node. Whenever the sender node sends  message to receiver node with either 0 or 1 bit, receiver node responds with an acknowledgement to the sender node. The sender sends messages in the alternate bit to the one that the receiver-node delivers the message from, swallowing of k messages from transit after delivering the message ensures that the load always moves from the higher loaded node to lesser node. Thus, the new incoming loads transfer in without duplicate or omitted messages.
\end{proof}

}

\remove{
\section{Comparison Table}
\begin{table}[H]
\label{t:table} 
\caption{Comparison of different Load Balancing algorithms for the \emph{general graph} (\emph{where $n$ is the number of nodes, $d$ is the maximum degree of a node, $K$ is the initial discrepancy, $D$ is the graph diameter, $\lambda$ is the second highest eigenvalue of the diffusion matrix, $\alpha$ is the edge expansion value of the graph, $\epsilon > 0$ is an arbitrarily small constant, D' represent Deterministic algorithm, R' represent Randomized algorithm, I represent Discrete(Integer), C represent Continuous})}
\label{tab:commands}

\scalebox{0.715}{
\begin{tabular}{l*{6}{c}r}
\hline
 Algorithms & Type & Approach & Transfer & Final discrepancy & Rounds & Anytime  \\
(References) &  & (Diffusion/ & (I/C) & ($L_{max} - L_{min}$) & (Steps) &    \\
 & &Matching) & & & & \\
\hline
  \textit{Synchronous Model}& & & & & & \\
   \hline
    Rabani et al. \cite{DBLP:conf/focs/RabaniSW98}& D'  & Diffusion/Matching & I & $ O\left( \frac{d \log n}{1-\lambda} \right)  $ & $O \left( \frac{\log (Kn)}{1-\lambda}\right)$ & No   \\
    
     &  &  & C  & \emph{$\epsilon$} & $ O\left(\frac{\ln (Kn^2/\epsilon)}{(1-\lambda)}\right)$ & No   \\
    
\hline
Muthukrishnan  & D' & Diffusion & C & $O \left( \frac{ dn}{1 - \lambda}\right)$  & $O \left( \frac{ \log (Kn)}{1 - \lambda}\right)$ & No \\
et al. \cite{DBLP:conf/spaa/GhoshMS96} & & & & & & \\

\hline
    Akbari et al. \cite{DBLP:conf/podc/AkbariBS12} & R' & Diffusion & I &$O(\sqrt{d \log n})$ \emph{w.h.p}&$O \left( \frac{d  \log (Kn)}{1 - \lambda}\right)$ & No \\
    
    Berenbrink et al. \cite{DBLP:conf/soda/BerenbrinkCFFS11} & R' & Diffusion & I & $O \left(d \sqrt{\log n} +\sqrt{ \frac{d  \log n  \log d}{1 - \lambda}}\right)$ & $O \left( \frac{\log (Kn)}{1 - \lambda}\right)$ & No \\
    
    & & & & \emph{w.h.p} & & \\

    Friedrich et al. \cite{DBLP:conf/soda/FriedrichGS10} & R' & Diffusion & I & $ O\left( \frac{d \log \log n}{(1- \lambda)}  \right)  $ \emph{w.h.p} & $O \left( \frac{\log (Kn)}{1 - \lambda}\right)$  & No \\
\hline    
 Friedrich et al. \cite{DBLP:conf/stoc/FriedrichS09} & R' & Matching & I  & $ O\left( 
\sqrt{\frac{\log^3 n}{(1- \lambda)} }\right)$ \emph{w.h.p} & $O \left( \frac{\log (Kn)}{1 - \lambda}\right)$ & No \\   
  
 Sauerwald et al. \cite{DBLP:conf/focs/SauerwaldS12}& R' & Matching & I &   Constant \emph{w.h.p} &  $O \left( \frac{ \log (Kn)}{1 - \lambda}\right)$  & No\\  
\hline   
 Feuilloley et al. \cite{DBLP:conf/wdag/FeuilloleyHS15}& D' & Matching & I  & Constant & $O(K^3 d^K)$ & No \\
&  &  & C & Constant & $O(K^3 \log d)$ & No  \\   
      
\hline
   Elsässer et al. \cite{DBLP:journals/jgaa/ElsasserMS06}&R' &Diffusion+ & I  & Constant & $O\left((\log K) + \frac{(\log n)^2}{(1- \lambda)}\right)$  & No \\
& &Random Walk& & & \emph{w.h.p}&\\

Elsässer et al. \cite{DBLP:conf/podc/ElsasserS10}&R' & Diffusion+ & I  & Constant & $O \left( \frac{\log (Kn)}{1 - \lambda}\right)$ \emph{w.h.p} & No\\
&&Random Walk&&&&\\

Elsässer et al. \cite{DBLP:conf/podc/ElsasserS10}&R' & Diffusion+ & I  & Constant & $O(D \log n)$ \emph{w.h.p} & No\\
&&Random Walk&&&&\\

\hline
Our Algorithm \ref{algorithm:new_cont}& D'  & Deal-Agreement based  & C & $\epsilon$  & $O(nD \log(nK/\epsilon))$ & Yes \\
& & Generalized & & & & \\
& & Matching& & & & \\

Our Algorithm \ref{algorithm:new_disc}& D'  & Deal-Agreement based & I & 1-Balanced &  $O(n D  \log(n K) + n D^2)$ & Yes\\
& & Generalized & & & & \\
& & Matching& & & & \\

Our Algorithm \ref{algo:a1}& D'  & Deal-Agreement based  & I & 1-Balanced & $O(nK^2)$ & Yes \\
& & Diffusion& & & & \\
\hline

    \textit{Asynchronous Model}& & & & & & \\
   \hline
     \textit{Partially Async.:}& & & & & & \\
     J. Song \cite{DBLP:conf/ipps/Song93}& D'  & Diffusion &  C & $ \lceil D/2 \rceil $ &  -  & No \\
   
   \textit{Fully Asynchronous:}& & & & & & \\
   
Aiello et al. \cite{DBLP:conf/stoc/AielloAMR93}& D'  & Matching & I&$O \left( \frac{d^2 \log n}{\alpha}\right)$  & $O(K / \alpha)$ & No \\
Ghosh et al. \cite{DBLP:conf/stoc/GhoshLMMPRRTZ95}& D'  & Matching & I & $O \left( \frac{d^2 \log n}{\alpha}\right)$  & $O(K / \alpha)$ & No \\
Our Algorithm \ref{algo:a4}& D'  & Diffusion & I & 1-Balanced & $O(nK^2)$ & Yes\\
  \hline
 \textit{Self-Stabilizing Algo.}\\
  \hline
  Flatebo et al. \cite{504130} & D' & - & I & - & - & No\\
  \hline

\end{tabular}
}
\end{table}
}

\end{document}